\documentclass[conference]{IEEEtran}
\usepackage[latin9]{inputenc}
\usepackage{array}
\usepackage{pifont}
\usepackage{float}
\usepackage{booktabs}
\usepackage{bbding}
\usepackage{multirow}
\usepackage{tabularx}
\usepackage{amsmath}
\usepackage{amsthm}
\usepackage{amssymb}
\usepackage{graphicx}

\makeatletter

\providecommand{\tabularnewline}{\\}
\floatstyle{ruled}
\newfloat{algorithm}{tbp}{loa}
\providecommand{\algorithmname}{Algorithm}
\floatname{algorithm}{\protect\algorithmname}

\theoremstyle{plain}
\newtheorem{thm}{\protect\theoremname}

\IEEEoverridecommandlockouts

\usepackage{cite}
\usepackage{amsfonts}
\usepackage{amsthm}

\usepackage{textcomp}
\usepackage{xcolor}
\def\BibTeX{{\rm B\kern-.05em{\sc i\kern-.025em b}\kern-.08em
   T\kern-.1667em\lower.7ex\hbox{E}\kern-.125emX}}

\providecommand{\tabularnewline}{\\}
\floatstyle{ruled}
\newfloat{algorithm}{tbp}{loa}
\providecommand{\algorithmname}{Algorithm}
\floatname{algorithm}{\protect\algorithmname}

\usepackage{algorithm}
\usepackage{algpseudocode}
\algnewcommand{\Linecomment}[1]{\Statex \(\triangleright\) #1}
\algrenewcommand\algorithmicrequire{\textbf{Input:}}
\algrenewcommand\algorithmicensure{\textbf{Output:}}
\usepackage[justification=centering]{caption}

\theoremstyle{plain}

\newtheorem{lemma}{Lemma}
\newtheorem{Corollary}{Corollary}

\makeatother

\providecommand{\theoremname}{Theorem}

\begin{document}
\title{An $O\left(K\right)$-Approximation Algorithm for Scheduling Coflows
in $K$-Core Optical Circuit Switching Networks}
\author{{\normalsize Xin Wang$^{a}$, Hong Shen$^{a}$, Hui Tian$^{b}$, Ye
Tao$^{a}$}\\
 \\
 {\normalsize$^{a}$School of Engineering and Technology, Central
Queensland University, Australia}\\
 {\normalsize$^{b}$School of Information and Communication Technology,
Griffith University, Australia}}
\maketitle
\begin{abstract}
Coflow has emerged as a fundamental application-layer abstraction
in distributed systems, representing communication dependencies and
enabling collaborative management of related flows to enhance job
completion efficiency. To meet the increasing bandwidth demands of
modern data center networks (DCNs), optical circuit switches are widely
deployed due to their high capacity and energy efficiency. Simultaneously,
DCN deployments are evolving towards heterogeneous parallel architectures,
where multiple independent optical circuit switching (OCS) cores operate
concurrently to facilitate bandwidth expansion and incremental upgrades.
However, existing research on coflow scheduling in multi-core switching
fabrics primarily focuses on electrical packet switching (EPS) networks,
with a few known results on OCS networks without or with a poor performance
guarantee.

This paper studies the coflow scheduling problem in multi-core OCS
networks under the \textit{not-all-stop} (i.e., asynchronous) reconfiguration
model, focusing on two major challenges of overcoming cross-core coupling
for inter-core traffic allocation and satisfying the constraints of
port exclusivity and reconfiguration overhead for intra-core circuit
scheduling. To minimize total weighted coflow completion time (CCT),
we propose an efficient algorithm by integrating linear programming-guided
(LP-guided) global coflow ordering, inter-core flow allocation and
intra-core circuit scheduling that achieves approximation ratios of
$8K$ and $\left(8K+1\right)$ for zero and arbitrary release times
of coflows, respectively, where $K$ is the number of OCS cores. It
significantly improves the known result of $O\left(M\frac{w_{\max}}{w_{\min}}K\right)$-approximation,
where $w_{\max}$ and $w_{\min}$ are the maximum and minimum weights
of $M$ input coflows, respectively. This framework is also applicable
to $H$-core EPS networks, providing approximation guarantees of $4H$
and $\left(4H+1\right)$ for zero-time and arbitrary-time release,
respectively. Trace-driven experiments using a real Facebook workload
further show that the proposed algorithm delivers strong practical
performance in both total weighted CCT and tail CCT.
\end{abstract}

\begin{IEEEkeywords}
Coflow scheduling, optical circuit switching, heterogeneous parallel
networks, approximation algorithm, linear programming. 
\end{IEEEkeywords}

\section{Introduction}

Distributed computing frameworks, such as MapReduce \cite{mapreduce},
Spark \cite{spark} and Dryad \cite{dryad}, structure jobs as interdependent
communication stages separated by synchronization barriers. Each subsequent
computation stage begins only after all parallel flows in the current
stage have completed. To aggregate semantically related flows into
a unified scheduling entity, the \textit{coflow} abstraction \cite{networking}
has been introduced, enabling collaborative optimization. For instance,
during the MapReduce shuffle stage, each reduce worker must receive
all intermediate outputs from all map workers before continuing execution.
Consequently, the completion of the shuffle phase is determined by
the slowest individual flow, indicating that optimizing only flow
completion time (FCT) is insufficient to improve job-level performance.
Instead, the focus should shift to coflow completion time (CCT), defined
as the completion time of the last flow within a coflow, as it more
directly determines end-to-end application efficiency.

Prior research \cite{literature4,theoretical5,literature6,Baraat,decentralized1,Aalo,CODA,wang2023online,wang2023efficient,literature9,improved,literature32}
on coflow scheduling has predominantly utilized the single-core electrical
packet switching (EPS) model. In this model, the data center network
(DCN) is abstracted as a single non-blocking switching fabric with
full bisection bandwidth, simplifying the characterization of port-capacity
constraints and facilitating scheduler design. Nevertheless, with
the increasing traffic demands of modern data centers, EPS-based architectures
encounter significant challenges related to scalability, deployment
costs, and power consumption. To mitigate these issues, optical circuit
switches have been incorporated into single-core DCN architectures,
enabling dedicated high-capacity circuits for large-volume data transmission,
thereby improving communication efficiency. Several coflow scheduling
schemes \cite{omco,reco,regularization,sunflow,zhang2020minimizing}
have been proposed within the single-core optical circuit switching
(OCS) model. In addition to pure packet-switched and circuit-switched
designs, recent research has extended the single-core model to hybrid
EPS-OCS architectures, where packet and circuit resources coexist
and are jointly optimized for coflow scheduling \cite{wang2024scheduling,co-scheduler,wang2025optimal,ONS}.

The traditional single-core abstraction is increasingly inadequate
for representing modern data center architectures. Industry reports
\cite{cisco2016_40g,cisco2016gci} demonstrate that modern DCN architectures
are evolving toward parallel designs, where multiple heterogeneous
network cores operate concurrently to enhance aggregate bandwidth.
In practical deployments, different generations of network architectures
are often preserved and integrated rather than completely replaced,
forming heterogeneous parallel networks (HPNs), in which multiple
independent cores share the same set of hosts \cite{huang2020weaver}.
In response to this architectural evolution, previous research has
investigated coflow scheduling in multi-core EPS networks, leveraging
parallel packet-switched fabrics to increase network capacity \cite{huang2020weaver,chen2023efficient}.
Parallelism is also becoming increasingly prevalent in circuit-switched
fabrics. For example, Google's Jupiter architecture replaces the traditional
spine layer with a datacenter interconnect layer composed of multiple
parallel OCS cores, forming a directly connected architecture that
supports flexible, data center-scale capacity upgrades and reconfigurations
\cite{poutievski2022jupiter}. Although such multi-core OCS infrastructures
have been deployed, the corresponding coflow scheduling mechanisms
remain insufficiently studied. Such architectures enable greater flexibility
in capacity scaling and fundamentally alter the scheduling model.

Coflow scheduling in multi-core OCS networks faces numerous challenges.
Unlike packet-switched networks, OCS-based systems are subject to
two main constraints. First, port exclusivity restricts each ingress
or egress port to participating in only one circuit at any given time.
Second, each circuit reconfiguration incurs a non-negligible delay
$\delta$, typically ranging from hundreds of microseconds to milliseconds.
Existing OCS reconfiguration mechanisms are generally divided into
the \textit{all-stop} and \textit{not-all-stop} models (see Subsection
\ref{subsec:Reconfiguration-Mechanism}). In the \textit{all-stop}
(i.e., synchronous) model, whenever the circuit configuration changes,
all ongoing transmissions are paused. In contrast, the \textit{not-all-stop}
(i.e., asynchronous) model only interrupts the ports involved in the
circuit update, while transmissions on unaffected circuits continue
uninterrupted. This paper focuses on the \textit{not-all-stop} model,
which is more practically relevant but also more challenging, as it
further complicates resource coupling and scheduling decisions.

When multiple OCS cores operate in parallel, coflow scheduling must
jointly determine flow allocation across cores and circuit scheduling
within each core, while respecting one-to-one port exclusivity and
non-negligible reconfiguration overhead under the \textit{not-all-stop}
model. In contrast to single-core OCS scheduling or multi-core EPS
scheduling, the combination of inter-core traffic coupling and OCS-specific
switching constraints makes this problem more challenging. This study
addresses the multi-coflow scheduling problem in multi-core OCS networks
and introduces an approximation algorithm with a provable guarantee
for minimizing the total weighted CCT. Notably, the proposed approach
significantly improves upon the prior $O\left(M\frac{w_{\max}}{w_{\min}}K\right)$-approximation
bound by achieving $O\left(K\right)$ approximation ratios. Additionally,
the proposed framework can be naturally applied to multi-core EPS
networks by setting the reconfiguration delay to zero (i.e., $\delta=0$)
and substituting the OCS-specific lower bounds with the corresponding
lower bounds tailored to the EPS setting, thereby yielding approximation
guarantees.

The main contributions of this paper are summarized as follows:
\begin{itemize}
\item We formulate the coflow scheduling problem in multi-core OCS networks
under the \textit{not-all-stop} reconfiguration model and propose
an efficient algorithm that combines LP-guided global coflow ordering,
inter-core flow allocation and intra-core circuit scheduling.
\item We show that our algorithm achieves approximation ratios of $8K$
and $\left(8K+1\right)$ for zero-release and arbitrary-release times
of coflows, respectively, in a $K$-core OCS network, enabling a performance
guarantee based solely on the architectural layout of the switching
fabric rather than on input features. It represents a significant
improvement over the known result, which has an approximation ratio
proportional to the product of the coflow max-to-min weight ratio
$\frac{w_{\max}}{w_{\min}}$, the number of coflows $M$ and $K$.
\item We demonstrate that the proposed algorithm framework can be naturally
applied to multi-core EPS networks. Specifically, for an $H$-core
EPS network, the algorithm achieves approximation ratios of $4H$
and $\left(4H+1\right)$ under zero-release and arbitrary-release
settings, respectively, thus providing a unified approximation perspective
for multi-core OCS and EPS architectures.
\item We conduct extensive trace-driven simulations using real Facebook
workloads to evaluate the proposed algorithm. The results demonstrate
that the algorithm achieves superior overall performance, consistently
outperforming representative ablation baselines and significantly
reducing both total weighted CCT and tail CCT, thereby validating
its practical effectiveness.
\end{itemize}
The remainder of this paper is organized as follows: Section \ref{sec:Related-Work}
reviews related research and provides a comparative analysis; Section
\ref{sec:Model-and-Problem} introduces the system model and formal
problem formulation; Section \ref{sec:Multiple-Coflow-Scheduling}
describes the proposed multiple coflow scheduling algorithm and its
theoretical performance guarantees; Section \ref{sec:Experimental-Evaluations}
presents experimental results using a realistic Facebook trace; Section
\ref{sec:Conclusions} concludes this paper.

\section{Related Work\label{sec:Related-Work}}

Coflow scheduling has been investigated across a range of data center
network (DCN) switching models. Initial research focused on the single-core
electrical packet switching (EPS) model. Later studies expanded coflow
scheduling to single-core optical circuit switching (OCS) scenarios,
encompassing both pure OCS and hybrid EPS-OCS architectures, and considering
both \textit{all-stop} and \textit{not-all-stop} reconfiguration mechanisms.
Additionally, research has explored multi-core EPS architectures,
where multiple packet-switched cores operate simultaneously to enhance
aggregate bandwidth. However, coflow scheduling in multi-core OCS
networks remains underexplored. This section reviews relevant literature
from these perspectives and presents a comparative summary in Table
\ref{tab:RW_table-1}.

\begin{table*}[tp]
\centering{}{}{}\vspace*{0.8\baselineskip}
 \caption{COMPARISON AMONG RELATED WORK}
\label{tab:RW_table-1} %
\begin{tabular}{c|c|c|c|c|c}
\hline 
\multirow{2}{*}{Works} & \multicolumn{2}{c|}{Single-Core} & \multicolumn{2}{c|}{Multi-Core} & \multirow{2}{*}{Provable Guarantee}\tabularnewline
\cline{2-5}
 & \multirow{1}{*}{EPS-Enable} & OCS-Enable & EPS-Enable & OCS-Enable & \tabularnewline
\hline 
Varys \cite{literature6}, Baraat \cite{Baraat}, D-CAS \cite{decentralized1},
CODA \cite{CODA} & \CheckmarkBold{} & \ding{56} & \multicolumn{1}{c|}{\ding{56}} & \ding{56} & \ding{56}\tabularnewline
\hline 
Qiu \textit{et al. }\cite{literature9}, Khuller \textit{et al.} \cite{literature10},
Shafiee \textit{et al.} \cite{improved} & \CheckmarkBold{} & \ding{56} & \ding{56} & \ding{56} & \CheckmarkBold{}\tabularnewline
\hline 
OMCO \cite{omco} & \ding{56} & \CheckmarkBold{} & \ding{56} & \ding{56} & \ding{56}\tabularnewline
\hline 
Sunflow \cite{sunflow}, Reco-Sin \cite{reco}, Reco-Mul+ \cite{regularization},
GOS \cite{zhang2020minimizing} & \ding{56} & \CheckmarkBold{} & \ding{56} & \ding{56} & \CheckmarkBold{}\tabularnewline
\hline 
Co-scheduler \cite{co-scheduler}, ONS \cite{ONS} & \CheckmarkBold{} & \CheckmarkBold{} & \multirow{1}{*}{\ding{56}} & \ding{56} & \ding{56}\tabularnewline
\hline 
Wang \textit{et al.} \cite{wang2024scheduling}, Wang \textit{et al.}
\cite{wang2025optimal} & \CheckmarkBold{} & \CheckmarkBold{} & \ding{56} & \ding{56} & \CheckmarkBold{}\tabularnewline
\hline 
Weaver \cite{huang2020weaver}, Chen \cite{chen2023scheduling}, Chen
\cite{chen2023efficient} & \ding{56} & \ding{56} & \multirow{1}{*}{\CheckmarkBold{}} & \ding{56} & \CheckmarkBold{}\tabularnewline
\hline 
Wang \textit{et al. }\cite{wang2026scheduling},\textbf{ Our Work} & \ding{56} & \ding{56} & \ding{56} & \CheckmarkBold{} & \CheckmarkBold{}\tabularnewline
\hline 
\end{tabular}{}{}
\end{table*}

\subsection{Coflow Scheduling in Single-Core EPS Networks}

Orchestra \cite{literature4} is widely considered the first study
to introduce the coflow abstraction and demonstrate that even a simple
FIFO-based strategy can significantly improve performance through
coflow-aware scheduling. Varys \cite{literature6} proposed two greedy
heuristics: smallest-effective-bottleneck-first (SEBF) and minimum-allocation-for-desired-duration
(MADD), for greedily scheduling coflows based on bottleneck completion
time in single-core EPS networks, with the goal of minimizing the
overall CCT. In decentralized environments, Baraat \cite{Baraat}
used multiplexing techniques to solve the head-of-line blocking problem
of small coflows, while D-CAS \cite{decentralized1} also focused
on decentralized coflow scheduling. Aalo \cite{Aalo} and NC-DRF \cite{NC-DRF}
are two well-known information-agnostic coflow schedulers that prioritize
efficiency and fairness, respectively. Specifically, Aalo \cite{Aalo}
proposed a discretized coflow-aware least-attained service (D-CLAS)
algorithm that can operate efficiently without prior knowledge of
flow information. In contrast, NC-DRF \cite{NC-DRF} was designed
to ensure isolation using load-balancing principles. CODA \cite{CODA}
was the first to apply machine learning to identify coflows between
individual flows. Recently, Wang \textit{et al. }\cite{wang2023online}
proposed a multi-stage online job scheduling framework based on deep
reinforcement learning (DRL). In subsequent work, Wang \textit{et
al. }\cite{wang2023efficient} combined limited multiplexing with
DRL to reduce the average weighted CCT while maintaining fairness.
However, these methods are predominantly heuristic and lack provable
worst-case performance guarantees.

At the theoretical level, several approximation results have been
established. Qiu \textit{et al.} \cite{literature9} proposed a deterministic
algorithm with an approximation ratio of $\frac{67}{3}$ to minimize
the total weighted CCT. Khuller \textit{et al.} \cite{literature10}
modeled the problem as a concurrent open-shop problem and derived
a 12-approximation algorithm. Shafiee \textit{et al.} \cite{improved}
subsequently improved this bound to 5 using a linear programming (LP)
approach, while Wang \textit{et al.} \cite{literature32} achieved
a 2-approximation algorithm by simplifying the process and eliminating
the need for LP solving. Most of these works focus on single-stage
coflow scheduling without considering the dependencies between coflows
within a job. Tian \textit{et al.} \cite{literature13} were the first
to investigate the scheduling of dependent coflows for multi-stage
jobs, establishing an approximation ratio of $\left(2N+1\right)$,
where $N$ represents the number of hosts. Subsequently, Shafiee \textit{et
al. }\cite{theoretical5} developed a polynomial-time algorithm with
an approximation ratio of $O\left(\frac{\mu\log\left(N\right)}{\log\left(\log\left(N\right)\right)}\right)$,
where $\mu$ represents the maximum number of coflows in the job.

\subsection{Coflow Scheduling in Single-Core OCS Networks}

Research on coflow scheduling for single-core OCS architectures, including
pure OCS and hybrid EPS-OCS architectures, remains relatively limited.
Given the two main OCS reconfiguration paradigms, the \textit{all-stop}
model and the \textit{not-all-stop} model, this paper reviews the
existing literature under both settings.

\subsubsection{\textit{All-Stop} Reconfiguration Model}

OMCO \cite{omco} was the first online algorithm for multi-coflow
scheduling in single-core pure OCS networks. Reco-Sin \cite{reco}
and Reco-Mul+ \cite{regularization} established the first approximation
guarantees for single-coflow and multi-coflow scheduling in a single-core
pure OCS network, with approximation ratios of 2 and $8M$, respectively,
where $M$ represents the number of coflows. The above methods are
based on Birkhoff-von Neumann (BvN) decomposition \cite{BvN}. Furthermore,
Wang \textit{et al.} \cite{wang2024scheduling} developed approximation
algorithms with provable performance guarantees for both single and
multiple coflow scheduling in a single-core hybrid EPS-OCS network.
All these methods assume that the OCS operates under the \textit{all-stop}
model.

\subsubsection{\textit{Not-All-Stop} Reconfiguration Model}

Under the \textit{not-all-stop} model, Sunflow \cite{sunflow} first
proposed a constant-factor approximation algorithm for single-coflow
scheduling in single-core pure OCS networks, and a heuristic method
for scheduling multiple coflows. Subsequently, GOS \cite{zhang2020minimizing}
developed a 4-approximation algorithm for multi-coflow scheduling
in a single-core pure OCS network. In a hybrid optical-electrical
environment, Co-scheduler \cite{co-scheduler} first considered both
optical-electrical hybrid-switching characteristics and coflow structures,
although it did not provide formal performance guarantees. ONS \cite{ONS}
proposed an online heuristic algorithm for minimizing the total CCT
in single-core hybrid EPS-OCS networks, also without theoretical performance
bounds. All these methods are based on the \textit{not-all-stop} reconfiguration
model.

\subsection{Coflow Scheduling in Multi-Core EPS Networks}

In recent years, the coflow scheduling problem in multi-core EPS networks
has attracted increasing attention. Weaver \cite{huang2020weaver}
studied single-coflow scheduling in heterogeneous parallel networks
(HPNs), and proposed an $O(K)$-approximation algorithm, where $K$
is the number of network cores. Chen \cite{chen2023scheduling} further
studied the multi-coflow scheduling in HPNs and derived an $O\left(\frac{\log K}{\log\log K}\right)$-approximation
algorithm. Furthermore, Chen \cite{chen2023efficient} considered
identical parallel networks and developed coflow-level approximation
algorithms with approximation ratios of $4K+1$ and $4K$ for arbitrary
and zero release times, respectively, where $K$ represents the number
of identical cores.

\subsection{Coflow Scheduling in Multi-Core OCS Networks}

Previous studies have examined the coflow scheduling problem in single-core
EPS and OCS architectures, as well as in multi-core EPS networks.
However, theoretical advancements in multi-core OCS environments remain
limited. To the best of our knowledge, Wang \textit{et al. }\cite{wang2026scheduling}
were the first to analyze multi-coflow scheduling in multi-core OCS
networks, establishing an $O\left(M\frac{w_{\max}}{w_{\min}}K\right)$-approximation
guarantee, where $M$ is the number of coflows, $K$ denotes the number
of OCS cores, and $w_{\max}$ and $w_{\min}$ are the maximum and
minimum coflow weights, respectively. Consequently, the approximation
ratio is influenced by both the network size and input-dependent parameters,
including the number of coflows and the weight range. In contrast,
this work introduces a provable approximation algorithm for multi-coflow
scheduling in a $K$-core OCS network under the \textit{not-all-stop}
(asynchronous) reconfiguration model, achieving an $O\left(K\right)$-approximation
guarantee that depends solely on the architectural parameter $K$.
This result represents a substantial improvement over the previously
established bound for the same setting.

\section{System Model and Problem Formulation\label{sec:Model-and-Problem}}

This section introduces the system model, including the network architecture,
traffic abstraction, and optical circuit switching (OCS) reconfiguration
mechanism. Based on this model, the multi-coflow scheduling problem
in heterogeneous parallel networks (HPNs) is formally defined, and
its computational complexity is analyzed. The main symbols used in
this paper are summarized in Table \ref{tab:notation1}.

\begin{table}[h]
{}{}\vspace*{0.8\baselineskip}

\caption{Mathematical Symbols}
\label{tab:notation1} %
\begin{tabularx}{\columnwidth}{>{\raggedright\arraybackslash}X>{\raggedright}p{0.68\linewidth}}
\toprule 
{}{}Symbol & {}{}Definition\tabularnewline
\midrule 
{}{}$\mathcal{C}$ & {}{}The set of coflows\tabularnewline
{}{}$\mathcal{M}$ & {}{}The index set of coflows\tabularnewline
{}{}$M$ & {}{}The number of coflows, i.e., $M=\left|\mathcal{M}\right|$\tabularnewline
{}{}$\mathcal{K}$ & {}{}The index set of parallel OCS cores\tabularnewline
{}{}$K$ & {}{}The number of OCS cores, i.e., $K=\left|\mathcal{K}\right|$\tabularnewline
{}{}$N$ & {}{}The number of ingress/egress ports per core\tabularnewline
{}{}$\mathcal{I},\mathcal{J}$ & {}{}The index sets of source servers and destination servers, i.e.,
$\left|\mathcal{I}\right|=\left|\mathcal{J}\right|=N$\tabularnewline
{}{}$p$ & {}{}Any port in $\mathcal{I}\cup\mathcal{J}$\tabularnewline
{}{}$C_{m}$ & {}{}The $m$-th coflow, where $1\leq m\leq M$\tabularnewline
{}{}$\mathcal{F}_{m}$ & {}{}The set of flows in $C_{m}$\tabularnewline
{}{}$D_{m}$ & {}{}The demand matrix of $C_{m}$\tabularnewline
{}{}$D_{m}^{k}$ & {}{}The portion of $D_{m}$ allocated to core $k$\tabularnewline
{}{}$f_{m}\left(i,j\right)$ & {}{}The flow from ingress port $i$ to egress port $j$ of $C_{m}$\tabularnewline
{}{}$f_{m}^{k}\left(i,j\right)$ & {}{}The subflow of $f_{m}\left(i,j\right)$ transmitted on core
$k$\tabularnewline
{}{}$t_{m}^{k}\left(i,j\right)$ & {}{}The circuit establishment time of $f_{m}^{k}\left(i,j\right)$
on core $k$\tabularnewline
{}{}$d_{m}\left(i,j\right)/d_{m}^{k}\left(i,j\right)$ & {}{}The data size of $f_{m}\left(i,j\right)/f_{m}^{k}\left(i,j\right)$\tabularnewline
{}{}$\rho_{m,p}/\rho_{m,p}^{k}$ & {}{}The load incident to port $p$ in $D_{m}/D_{m}^{k}$\tabularnewline
{}{}$\tau_{m,p}/\tau_{m,p}^{k}$ & {}{}The number of nonzero entries incident to port $p$ in $D_{m}/D_{m}^{k}$\tabularnewline
{}{}$\rho_{m}/\rho_{m}^{k}$ & {}{}The maximum port load in $D_{m}/D_{m}^{k}$\tabularnewline
{}{}$\tau_{m}/\tau_{m}^{k}$ & {}{}The maximum number of nonzero entries in $D_{m}/D_{m}^{k}$\tabularnewline
{}{}$D_{1:m}$ & {}{}The prefix-aggregated demand matrix of first $m$ coflows, i.e.,
$D_{1:m}=\sum_{\ell=1}^{m}D_{\ell}$\tabularnewline
{}{}$D_{1:m}^{k}$ & {}{}The prefix-aggregated demand matrix on core $k$ for the first
$m$ coflows i.e., $D_{1:m}^{k}=\sum_{\ell=1}^{m}D_{\ell}^{k}$\tabularnewline
{}{}$\rho_{1:m,p}/\rho_{1:m,p}^{k}$ & {}{}The aggregate load incident to port $p$ in $D_{1:m}/D_{1:m}^{k}$\tabularnewline
{}{}$\tau_{1:m,p}/\tau_{1:m,p}^{k}$ & {}{}The number of nonzero entries incident to port $p$ in $D_{1:m}/D_{1:m}^{k}$\tabularnewline
{}{}$\rho_{1:m}/\rho_{1:m}^{k}$ & {}{}The maximum aggregate load in $D_{1:m}/D_{1:m}^{k}$\tabularnewline
{}{}$\tau_{1:m}/\tau_{1:m}^{k}$ & {}{}The maximum number of nonzero entries in $D_{1:m}/D_{1:m}^{k}$\tabularnewline
{}{}$r^{k}$ & {}{} The per-port transmission rate of core $k$\tabularnewline
{}{}$R$ & {}{}The aggregated port transmission rate across all cores, i.e.,
$R=\sum_{k\in\mathcal{K}}r^{k}$\tabularnewline
{}{}$w_{m}$ & {}{}The weight of $C_{m}$\tabularnewline
{}{}$a_{m}$ & {}{}The release time of $C_{m}$\tabularnewline
{}{}$\delta$ & {}{}The reconfiguration delay\tabularnewline
{}{}$T_{m}^{k}$ & {}{}The completion time of the portion of coflow $C_{m}$ allocated
to core $k$\tabularnewline
{}{}$T_{m}$ & {}{}The completion time of coflow $C_{m}$\tabularnewline
\bottomrule
\end{tabularx}{}{}
\end{table}

\subsection{Network Architecture}

This paper considers an HPN composed of $K$ independent, non-blocking
OCS cores operating in parallel, as illustrated in Fig. \ref{fig:HPN}.
The index set of OCS cores is denoted by $\mathcal{K}=\left\{ 1,\ldots,K\right\} $,
where each core $k\in\mathcal{K}$ corresponds to one OCS. The index
sets of source and destination servers are denoted by $\mathcal{I}$
and $\mathcal{J}$, respectively, with $\left|\mathcal{I}\right|=\left|\mathcal{J}\right|=N$.
The network interconnects $N$ source servers $\left\{ s_{i}:i\in\mathcal{I}\right\} $
and $N$ destination servers $\left\{ d_{j}:j\in\mathcal{J}\right\} $.
Accordingly, each core forms an independent, non-blocking $N\times N$
switching fabric.

\begin{figure}[h]
\centering \includegraphics[width=0.46\textwidth,height=5.2cm]{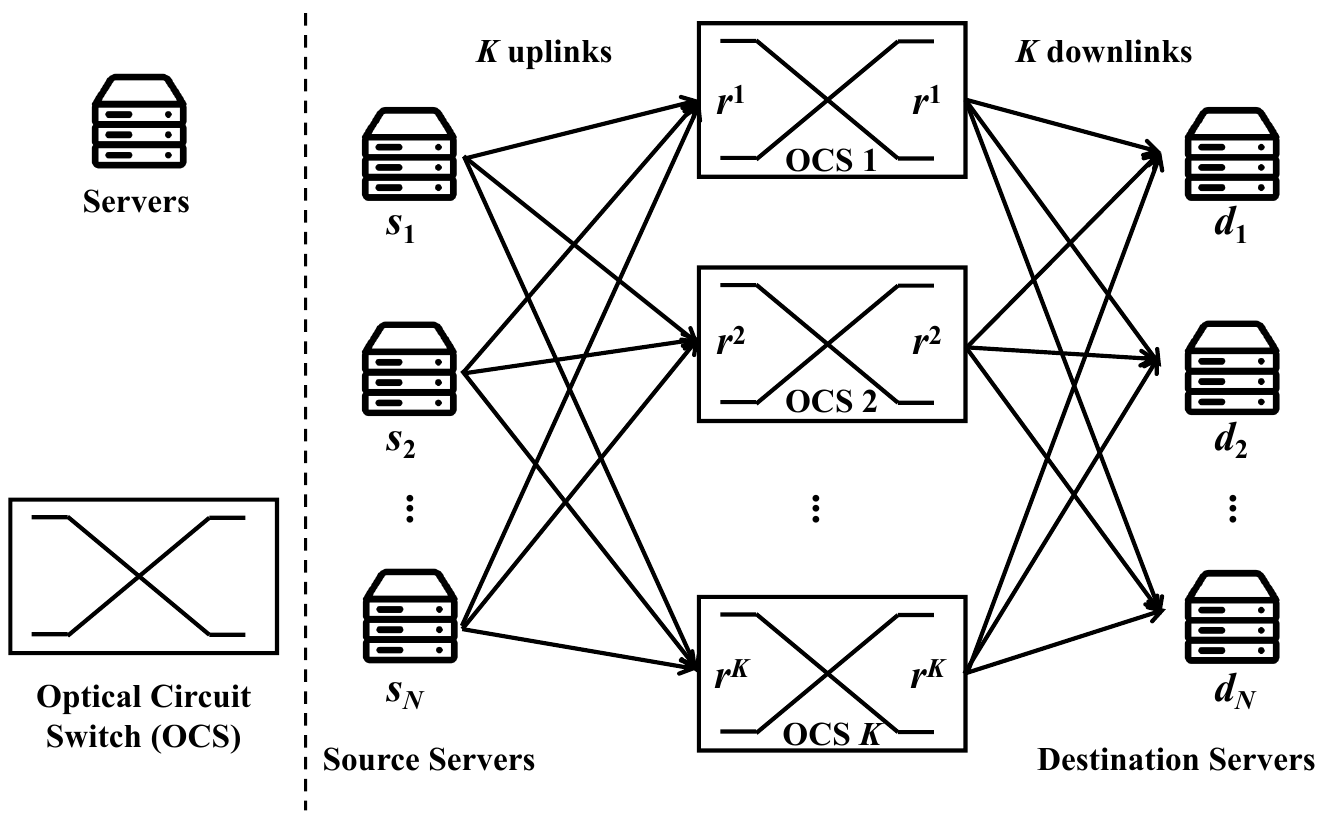}
\caption{Heterogeneous Multi-Core DCN Architecture}
\label{fig:HPN}
\end{figure}

Each source server is equipped with $K$ parallel uplinks, each of
which connects to a different OCS core; each destination server has
$K$ corresponding downlinks. In core $k$, source server $s_{i}$
is connected to ingress port $i$, and destination server $d_{j}$
is connected to egress port $j$, where $i\in\mathcal{I}$ and $j\in\mathcal{J}$.
Each core $k\in\mathcal{K}$ operates independently at a per-port
transmission rate $r^{k}$, capturing the heterogeneity in link capacities
across cores. Therefore, traffic can be allocated across multiple
cores, while circuit scheduling is performed independently within
each core. Let $R=\sum_{k\in\mathcal{K}}r^{k}$ represent the aggregated
port transmission rate across all cores.

\subsection{Traffic Abstraction}

We utilize the \textit{coflow} abstraction \cite{networking} to model
application-level communication requirements in HPNs. A coflow contains
a set of parallel flows that must be completed jointly to realize
a single communication stage of an application across multiple machines.

Let $\mathcal{C}\triangleq\left\{ C_{m}:m\in\mathcal{M}\right\} $
be the set of coflows, where $\mathcal{M}=\left\{ 1,...,M\right\} $
is the index set of coflows. Each coflow $C_{m}$ contains a set of
flows $\mathcal{F}_{m}$. For each $m\in\mathcal{M}$ and port pair
$\left(i,j\right)\in\mathcal{I}\times\mathcal{J}$, the flow $f_{m}\left(i,j\right)\in\mathcal{F}_{m}$
represents traffic from source server $s_{i}$ (equivalently, ingress
port $i$) to destination server $d_{j}$ (equivalently, egress port
$j$) with data size $d_{m}\left(i,j\right)$. Accordingly, each coflow
$C_{m}$ is characterized by a demand matrix $D_{m}=\left[d_{m}\left(i,j\right)\right]_{i\in\mathcal{I},j\in\mathcal{J}}$
of dimension $\left|\mathcal{I}\right|\times\left|\mathcal{J}\right|$
(i.e., $N\times N$).

\subsection{Reconfiguration Mechanism\label{subsec:Reconfiguration-Mechanism}}

Due to the circuit-switching nature of OCS, each core $k$ establishes
a one-to-one matching between its ingress and egress ports at any
given time. A circuit configuration can be represented as a matching
in the bipartite graph induced by the ingress and egress ports, ensuring
that each port participates in at most one active circuit. Each circuit
reconfiguration incurs a fixed delay $\delta$. During reconfiguration,
the affected ports are unavailable for data transmission.

\begin{figure}[h]
\centering \includegraphics[clip,width=0.46\textwidth,height=5.2cm]{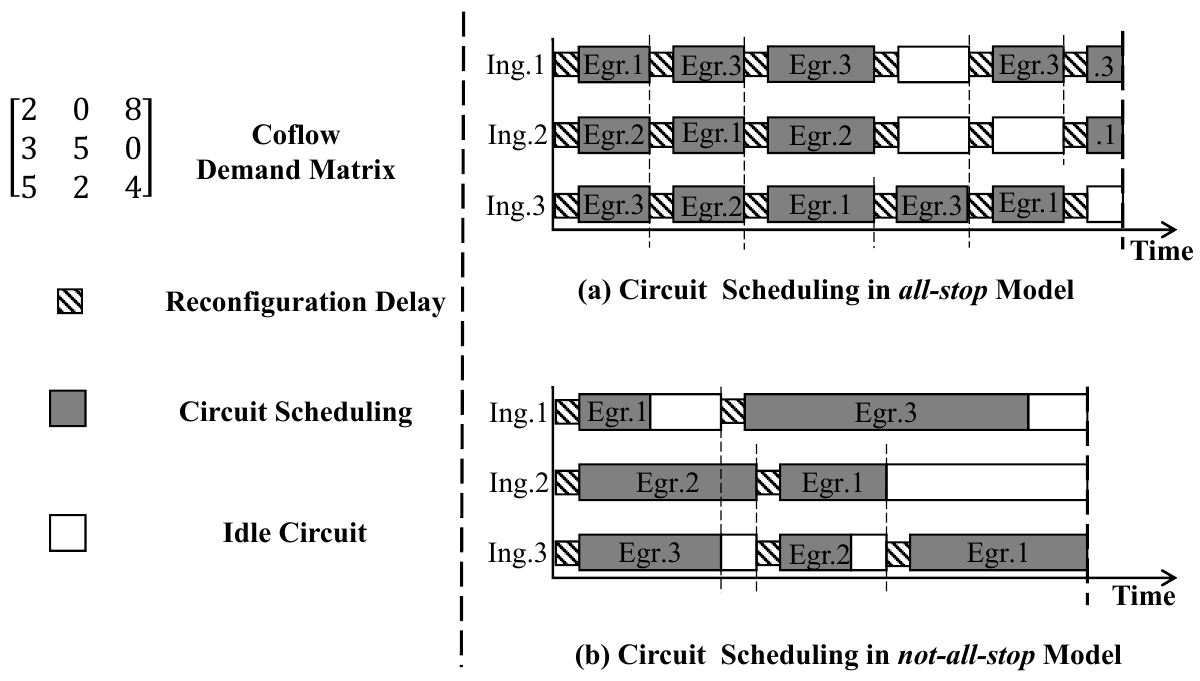}
\caption{Circuit Reconfiguration Models in an OCS Core}
\label{fig:reconfig}
\end{figure}

The circuit configurations evolve according to two standard reconfiguration
models, namely the \textit{all-stop} or \textit{not-all-stop} models,
as shown in Fig. \ref{fig:reconfig}. In the \textit{all-stop} model
(Fig. \ref{fig:reconfig}(a)), reconfiguration is synchronous: whenever
the configuration changes, all ongoing transmissions are paused. This
model is conceptually simple and is often associated with preemptive
scheduling. However, such global suspension may lead to unnecessary
port idleness and reduced resource utilization. In contrast, the \textit{not-all-stop}
model (Fig. \ref{fig:reconfig}(b)) adopts asynchronous reconfiguration,
in which only the ports involved in the circuit update are interrupted,
while other established circuits continue transmitting. In this setting,
once a flow begins transmission on a circuit, the transmission is
typically non-preemptive. Although this model improves link utilization
and reduces unnecessary interruptions, it also increases scheduling
complexity due to asynchronous reconfiguration.

\subsection{Problem Definition}

Consider a set of coflows $\mathcal{C}$, where each coflow $C_{m}\in\mathcal{C}$
is associated with a demand matrix $D_{m}=\left[d_{m}\left(i,j\right)\right]_{i\in\mathcal{I},j\in\mathcal{J}}$,
a positive weight $w_{m}$, and a release time $a_{m}\geq0$. The
goal is to schedule all flows $f_{m}\left(i,j\right)$ on a $K$-core
OCS network under the asynchronous (\textit{not-all-stop}) reconfiguration
model. A feasible schedule includes the following components:
\begin{itemize}
\item \textit{Global Coflow Ordering:} A permutation of $\mathcal{M}=\left\{ 1,...,M\right\} $
that specifies the global priority order of coflows. During execution,
this order is enforced among coflows that have already been released.
\item \textit{Inter-Core Flow Allocation:} For each coflow $C_{m}$ with
demand matrix $D_{m}$, determine an allocation $\left\{ D_{m}^{k}\right\} _{k\in\mathcal{K}}$,
where $k\in\mathcal{K}$ such that $D_{m}=\sum_{k\in\mathcal{K}}D_{m}^{k}$.
Here, $D_{m}^{k}=\left[d_{m}^{k}\left(i,j\right)\right]_{i\in\mathcal{I},j\in\mathcal{J}}$
denotes the portion of $D_{m}$ allocated to core $k$, satisfying
$d_{m}^{k}\left(i,j\right)\ge0$ and $\sum_{k\in\mathcal{K}}d_{m}^{k}\left(i,j\right)=d_{m}\left(i,j\right)$,
$\forall\left(i,j\right)\in\mathcal{I}\times\mathcal{J}$.
\item \textit{Intra-Core Circuit Scheduling:} For each core $k\in\mathcal{K}$
and each subflow $f_{m}^{k}\left(i,j\right)$ with $d_{m}^{k}\left(i,j\right)>0$,
determine a circuit schedule $S_{m}^{k}=\left\{ i,j,t_{m}^{k}\left(i,j\right)\right\} $,
where $t_{m}^{k}\left(i,j\right)$ represents the circuit establishment
time of $f_{m}^{k}\left(i,j\right)$. Each subflow can be scheduled
only after the release time of its coflow, i.e., $t_{m}^{k}(i,j)\ge a_{m}$,
since coflow $C_{m}$ is unavailable before time $a_{m}$.
\end{itemize}
Under the \textit{not-all-stop} model, the transmission of subflow
$f_{m}^{k}\left(i,j\right)$ starts from $t_{m}^{k}\left(i,j\right)+\delta$
and completes at $T_{m}^{k}\left(i,j\right)=t_{m}^{k}\left(i,j\right)+\delta+\frac{d_{m}^{k}\left(i,j\right)}{r^{k}}$.
The completion time of the portion of coflow $C_{m}$ assigned to
core $k$ is defined as $T_{m}^{k}=\max_{\left(i,j\right)\in\mathcal{I}\times\mathcal{J}}T_{m}^{k}\left(i,j\right)$,
and the overall coflow completion time (CCT) is $T_{m}=\max_{k\in\mathcal{K}}T_{m}^{k}$.
The goal is to minimize the total weighted CCT: $\min\sum_{m\in\mathcal{M}}w_{m}T_{m}$.

\subsection{Hardness Analysis}

The problem is computationally challenging even in the single-core,
single-coflow setting. When the reconfiguration delay is ignored,
i.e., $\delta=0$, scheduling a single coflow in a single-core OCS
network is equivalent to the non-preemptive open-shop scheduling problem
for minimizing the makespan, which is known to be NP-hard \cite{gonzalez1976open,sunflow}.
Therefore, single-coflow scheduling in a single-core OCS network is
NP-hard.

The multi-core OCS scheduling problem studied in this paper generalizes
this hard special case. Specifically, by setting $K=1$, $M=1$, and
$\delta=0$, the problem reduces to the single-core, single-coflow
case above. Hence, the multi-coflow scheduling problem in multi-core
OCS networks is NP-hard.

\section{Multi-Core Coflow Scheduling\label{sec:Multiple-Coflow-Scheduling}}

This section presents an LP-guided scheduling framework for multi-coflow
scheduling in heterogeneous multi-core OCS networks under the \textit{not-all-stop}
reconfiguration model. The LP-guided global order is integrated with
the subsequent inter-core flow allocation and intra-core circuit scheduling
stages. We describe the complete approximation algorithm in Algorithm
\ref{alg:alg1} and establish its provable performance guarantees.

\subsection{Algorithm Framework}

The LP-guided scheduling framework consists of three components: (i)
LP-guided global coflow ordering, (ii) inter-core flow allocation,
and (iii) intra-core circuit scheduling. The LP-guided order determines
the global coflow scheduling priority, which is then followed by the
allocation and scheduling stages to assign flows to OCS cores and
construct feasible port-exclusive circuit schedules within each core.

\subsubsection{LP-guided Coflow Ordering}

Prior work \cite{wang2026scheduling} established the allocation-independent
single-coflow lower bound $T_{\textrm{LB}}\left(D_{m}\right)\triangleq\delta+\frac{\rho_{m}}{R}$,
where $\rho_{m}\triangleq\max_{p\in\mathcal{I}\cup\mathcal{J}}\rho_{m,p}$
is the maximum port load in $D_{m}$, and $R\triangleq\sum_{k\in\mathcal{K}}r^{k}$
denotes the aggregated port rate across all $K$ OCS cores. However,
this bound applies only to an individual coflow and does not capture
prefix interactions among multiple coflows, which are crucial for
analyzing the total weighted CCT. Therefore, relying solely on this
lower bound leads to an approximation factor that scales with the
number of coflows $M$.

To overcome this limitation, we formulate an ordering-based integer
linear programming (ILP) and derive its linear programming (LP) relaxation,
which provides a prefix-aware global lower bound by capturing the
cumulative transmission and reconfiguration workloads induced by preceding
coflows. Consider a coflow $C_{m}$ with demand matrix $D_{m}=\big[d_{m}\left(i,j\right)\big]_{i\in\mathcal{I},j\in\mathcal{J}}$.
For any port $p\in\mathcal{I}\cup\mathcal{J}$, define the traffic
load incident to port $p$ as 
\[
\rho_{m,p}\triangleq\begin{cases}
\sum_{j\in\mathcal{J}}d_{m}(i,j), & \text{if }p=i\in\mathcal{I},\\
\sum_{i\in\mathcal{I}}d_{m}(i,j), & \text{if }p=j\in\mathcal{J},
\end{cases}
\]
and the number of nonzero flow entries incident to port $p$ as 
\[
\tau_{m,p}\triangleq\begin{cases}
\sum_{j\in\mathcal{J}}\mathbf{1}[d_{m}(i,j)>0], & \text{if }p=i\in\mathcal{I},\\
\sum_{i\in\mathcal{I}}\mathbf{1}[d_{m}(i,j)>0], & \text{if }p=j\in\mathcal{J}.
\end{cases}
\]

To capture the aggregate service capability of the multi-core OCS
network, we adopt an aggregated resource view for each source or destination
port $p\in\mathcal{I}\cup\mathcal{J}$. Under this view, traffic can
be processed in parallel across $K$ cores, where each core $k$ provides
a per-port transmission rate $r^{k}$. Therefore, the \textit{aggregate
transmission rate} available for port $p$ across all cores is $\sum_{k\in\mathcal{K}}r^{k}\left(=R\right)$.
Furthermore, since each port $p$ independently participates in circuit
reconfiguration operations across different cores, and each such operation
on each core requires $\delta$ time units, each core contributes
a reconfiguration-processing rate of at most $1/\delta$. Hence, the
\textit{aggregate reconfiguration-processing rate} associated with
port $p$ across all cores is at most $K/\delta$.

The LP formulation introduces a relaxed ordering variable $x_{m,m'}$
for each pair of coflows $C_{m}$ and $C_{m'}$. In the corresponding
integer programming (IP) formulation, $x_{m,m'}\in\left\{ 0,1\right\} $,
where $x_{m,m'}=1$ indicates that coflow $C_{m}$ completes before
coflow $C_{m'}$ and $x_{m,m'}=0$ indicates the opposite. The pairwise
ordering variables satisfy 
\begin{equation}
x_{m,m'}+x_{m',m}=1,\forall m\neq m'.\label{eq:lp-order}
\end{equation}

In the LP relaxation, the integrality constraint is relaxed to
\begin{equation}
0\le x_{m,m'}\le1,\forall m\neq m'.\label{eq:lp-relax}
\end{equation}

From the aggregated resource perspective, the formulation imposes
two necessary capacity constraints for each coflow $C_{m}$ and each
port $p\in\mathcal{I}\cup\mathcal{J}$:

(i) Transmission-Capacity Constraints: By the time coflow $C_{m}$
completes, all coflows ordered before $C_{m}$, together with $C_{m}$
itself, must have already transmitted all traffic incident to port
$p$. Since the aggregate transmission rate available on port $p$
is $R$, the cumulative transmitted load by time $T_{m}$ cannot exceed
$RT_{m}$. Hence,
\begin{equation}
T_{m}\ge\frac{1}{R}\left(\rho_{m,p}+\sum_{m'\neq m}\rho_{m',p}x_{m',m}\right),\forall m,p.\label{eq:lp-trans}
\end{equation}

(ii) Reconfiguration-Capacity Constraints: Similarly, when coflow
$C_{m}$ completes, all reconfiguration operations associated with
coflows ordered before $C_{m}$, together with those required by $C_{m}$,
must also have been completed on port $p$. Since the aggregate reconfiguration-processing
rate available on port $p$ is upper bounded by $K/\delta$, the cumulative
number of such reconfiguration operations cannot exceed $\left(K/\delta\right)T_{m}$.
Therefore, 
\begin{equation}
T_{m}\ge\frac{\delta}{K}\left(\tau_{m,p}+\sum_{m'\neq m}\tau_{m',p}x_{m',m}\right),\forall m,p.\label{eq:lp-reconfig}
\end{equation}

In addition, since coflow $C_{m}$ cannot complete before it is released,
we impose the release-time constraints

\begin{equation}
T_{m}\geq a_{m},\forall m.\label{eq:lp-release}
\end{equation}

Collecting the above constraints, we obtain the following LP relaxation:
\[
\min\sum_{m\in\mathcal{M}}w_{m}T_{m},\quad s.t.\;\left(2\right)-\left(6\right).
\]

Let $\bigl(\widetilde{T}_{m},\widetilde{x}_{m,m'}\bigr)$ denote the
optimal solution to the LP relaxation, and let $T_{m}^{*}$ denote
the completion time of coflow $C_{m}$ in an optimal feasible schedule
to the original problem. Since any feasible schedule for the original
problem induces a feasible integral solution to the LP, the formulation
is a valid relaxation of the original scheduling problem. Consequently,
the objective value of the LP relaxation provides a valid lower bound
on the optimal value of the original problem, namely $\sum_{m=1}^{M}w_{m}\widetilde{T}_{m}\leq\sum_{m=1}^{M}w_{m}T_{m}^{*}$.

Finally, we extract a global coflow priority order from the LP completion-time
values $\left\{ \widetilde{T}_{m}\right\} _{m\in\mathcal{M}}$. Specifically,
let $\pi$ be an ordering of coflows such that $\widetilde{T}_{\pi\left(1\right)}\le\widetilde{T}_{\pi\left(2\right)}\le\cdots\le\widetilde{T}_{\pi\left(M\right)}$,
with ties broken arbitrarily. For notational simplicity, we reindex
the coflows according to this LP-guided order, so that $\widetilde{T}_{1}\le\widetilde{T}_{2}\le\cdots\le\widetilde{T}_{M}$.
Throughout the rest of the paper, unless otherwise specified, the
index $m$ refers to the $m$-th coflow in this re-indexed LP-guided
order. Thus, coflows with smaller LP completion-time values are assigned
higher priorities. For any $m\in\mathcal{M}$, we define $L_{m}\triangleq\left\{ 1,\ldots,m\right\} $
as the prefix consisting of the first $m$ coflows in the re-indexed
LP-guided order. This LP-guided priority order is adopted in Algorithm
\ref{alg:alg1} and is enforced throughout the subsequent greedy allocation
and scheduling phases.

\subsubsection{Inter-Core Flow Allocation}

The inter-core flow allocation is performed by a prefix-aware greedy
procedure. Coflows are processed sequentially according to the re-indexed
LP-guided order. In our allocation, each flow is allocated entirely
to a single core; flow splitting is not allowed to avoid packet reordering,
buffering overhead, and additional control-plane complexity in practical
multi-core OCS deployments \cite{huang2020weaver}.

To guide allocation, we define a prefix workload measure for each
core. For the first $m$ coflows and core $k$, let $\rho_{1:m,p}^{k}\triangleq\sum_{\ell=1}^{m}\rho_{\ell,p}^{k}$
denote the cumulative traffic load incident to port $p$ on core $k$,
and let $\tau_{1:m,p}^{k}\triangleq\sum_{\ell=1}^{m}\tau_{\ell,p}^{k}$
denote the cumulative number of nonzero subflows incident to port
$p$ on core $k$. The prefix workload state of core $k$ is 
\[
\mathcal{X}_{1:m}^{k}\triangleq\left(\left\{ \rho_{1:m,p}^{k}\right\} _{p\in\mathcal{I}\cup\mathcal{J}},\left\{ \tau_{1:m,p}^{k}\right\} _{p\in\mathcal{I}\cup\mathcal{J}}\right).
\]

The prefix workload measure of core $k$ is defined as 
\[
\varPhi^{k}\left(\mathcal{X}_{1:m}^{k}\right)\triangleq\max_{p\in\mathcal{I}\cup\mathcal{J}}\left(\frac{\rho_{1:m,p}^{k}}{r^{k}}+\tau_{1:m,p}^{k}\delta\right),
\]
which captures the cumulative transmission and reconfiguration workloads
on the bottleneck port of core $k$.

During the allocation of coflow $C_{m}$, the temporary state of each
core $\mathcal{X}_{1:m}^{k}$ is initialized as $\mathcal{X}_{1:m-1}^{k}$
and is updated as flows of $C_{m}$ are assigned. For each flow $f_{m}\left(i,j\right)$,
the algorithm tentatively evaluates the resulting prefix workload
measure on each core and assigns the entire flow to the core that
yields the smallest value: 
\[
k^{*}\leftarrow\arg\min_{k\in\mathcal{K}}\varPhi^{k}\left(\mathcal{X}_{1:m}^{k}\oplus f_{m}\left(i,j\right)\right),
\]
where $\mathcal{X}_{1:m}^{k}\oplus f_{m}\left(i,j\right)$ denotes
the state obtained by tentatively assigning $f_{m}\left(i,j\right)$
to core $k$. Specifically, the operator $\oplus$ increases the cumulative
traffic loads of ports $i$ and $j$ by $d_{m}\left(i,j\right)$,
and increases their cumulative circuit establishment counts by one,
while leaving all other ports unchanged.

Intuitively, this rule prevents any single core from becoming the
dominant prefix bottleneck by jointly balancing traffic and reconfiguration
workloads across cores, thereby reducing cumulative queuing and blocking
effects in the subsequent circuit scheduling phase. However, the internal
processing order of flows within a coflow does not affect the approximation
guarantee, allocating larger flows earlier may reduce the total weighted
CCT in practice \cite{huang2020weaver,wang2026scheduling}.

\subsubsection{Intra-Core Circuit Scheduling}

After the inter-core allocation phase, each core independently schedules
its assigned subflows according to the global priority order among
released coflows. We use a greedy earliest-feasible scheduler, which
is invoked whenever a coflow is released or a subflow completes. At
each decision point, the scheduler scans unfinished released subflows
in global priority order and admits a subflow if its ingress and egress
ports are both available and the look-ahead condition is satisfied.

The scheduler satisfies the following properties: 
\begin{itemize}
\item \textit{Release-Aware Priority:} The scheduler follows the global
LP-guided priority order among released coflows, and no coflow is
served before its release time. 
\item \textit{Port-Exclusive:} Each ingress or egress port participates
in at most one active circuit at any time, thereby satisfying the
one-to-one port matching constraint of OCS. 
\item \textit{Reconfiguration-Aware}: Whenever a circuit is established
or reconfigured, the scheduler accounts for the fixed reconfiguration
delay $\delta$. Under the \textit{not-all-stop} model, only the ports
involved in the circuit update are interrupted, while transmissions
on unaffected circuits continue. 
\item \textit{Non-Preemptive:} Once a subflow starts transmission, it continues
until completion without interruption, thereby avoiding repeated circuit
interruptions and additional reconfiguration overhead. 
\item \textit{Look-Ahead Work-Conserving:} The scheduler is work-conserving
subject to a look-ahead admission rule. A lower-priority subflow is
admissible only if it does not conflict with any currently eligible
higher-priority subflow on the same core that uses the same ingress
or egress port. In addition, its non-preemptive transmission must
finish before the release time of any future higher-priority coflow
that has an assigned subflow on the same core using either of these
ports. Whenever both ports are idle and an admissible released subflow
exists, the scheduler starts one immediately. 
\end{itemize}
The look-ahead rule is needed only under arbitrary release times.
Without this rule, a non-preemptive scheduler may start a lower-priority
subflow shortly before the release of a higher-priority coflow that
requires the same port, thereby causing priority inversion. When all
coflows are released at time zero, no future higher-priority releases
need to be anticipated, and the scheduler reduces to a standard greedy,
non-preemptive, work-conserving policy.

\subsection{Approximation Algorithm}

This subsection formally introduces the approximation algorithm, as
shown in Algorithm \ref{alg:alg1}. The algorithm first determines
the global coflow priority order by solving the LP relaxation (Lines
1-2). Let $\widetilde{T}_{m}$ denote the optimal LP value of $T_{m}$
for coflow $C_{m}$. The coflows are then sorted in non-decreasing
order of $\widetilde{T}_{m}$ and re-indexed accordingly, so that
$\widetilde{T}_{1}\le\widetilde{T}_{2}\le\cdots\le\widetilde{T}_{M}$.

\begin{algorithm}[h]
\caption{Multi-Coflow Scheduling in Multi-Core OCS Networks}
\label{alg:alg1} \textbf{Input:} demand matrices $\left\{ D_{m}=\left[d_{m}\left(i,j\right)\right]\right\} _{m\in\mathcal{M}}$;
weights $\left\{ w_{m}\right\} _{m\in\mathcal{M}}$; release times
$\left\{ a_{m}\right\} _{m\in\mathcal{M}}$; core rates $\left\{ r^{k}\right\} _{k\in\mathcal{K}}$;
reconfiguration delay $\delta$\\
 \textbf{Output:} flow allocations $\left\{ D_{m}^{k}\right\} _{m\in\mathcal{M},k\in\mathcal{K}}$
and circuit schedules $\left\{ S_{m}^{k}\right\} _{m\in\mathcal{M},k\in\mathcal{K}}$
for all cores \begin{algorithmic}[1]\Linecomment COFLOW ORDERING
\State Solve the LP relaxation and obtain $\left\{ \widetilde{T}_{m}\right\} _{m\in\mathcal{M}}$
\State Sort coflows in non-decreasing order of $\widetilde{T}_{m}$
and re-index: $\widetilde{T}_{1}\le\widetilde{T}_{2}\le\cdots\le\widetilde{T}_{M}$
\Linecomment FLOW ALLOCATION\State Initialize $\mathcal{X}_{1:0}^{k}\leftarrow\mathbf{0}$,
$\forall\ensuremath{k\in\mathcal{K}}$\For{$m=1$ to $M$}\State
Initialize $\ensuremath{\mathcal{X}_{1:m}^{k}\leftarrow\mathcal{X}_{1:m-1}^{k}}$,
$D_{m}^{k}\leftarrow\mathbf{0}_{N\times N}$, $\forall k\in\mathcal{K}$
\State $\mathcal{F}_{m}\triangleq\left\{ f_{m}\left(i,j\right)\mid d_{m}\left(i,j\right)>0\right\} $
\State Sort $\mathcal{F}_{m}$ in non-increasing order of $d_{m}\left(i,j\right)$
\For{each flow $f_{m}\left(i,j\right)\in\mathcal{F}_{m}$}\State
$k^{*}\leftarrow\mathrm{argmin}_{k\in\mathcal{K}}\varPhi^{k}\left(\mathcal{X}_{1:m}^{k}\oplus f_{m}\left(i,j\right)\right)$\State
Allocate the entire flow $f_{m}\left(i,j\right)$ to core $k^{*}$:
$f_{m}^{k^{*}}\left(i,j\right)\leftarrow f_{m}\left(i,j\right)$ \State
$D_{m}^{k^{*}}\leftarrow D_{m}^{k^{*}}+d_{m}\left(i,j\right)E^{ij}$
\State $\mathcal{X}_{1:m}^{k^{*}}\leftarrow\mathcal{X}_{1:m}^{k^{*}}\oplus f_{m}^{k^{*}}\left(i,j\right)$
\EndFor\EndFor\Linecomment CIRCUIT SCHEDULING \For{each $k\in\mathcal{K}$
}\For{each $m\in\mathcal{M}$ }\State $\ensuremath{S_{m}^{k}\leftarrow\emptyset}$\EndFor\State
$\mathcal{F}^{k}\triangleq\bigcup_{m\in\mathcal{M}}\left\{ f_{m}^{k}\left(i,j\right)\mid d_{m}^{k}\left(i,j\right)>0\right\} $
\While{$\mathcal{F}^{k}\neq\emptyset$}\For{each $f_{m}^{k}\left(i,j\right)\in\mathcal{F}^{k}$}\If{$f_{m}^{k}\left(i,j\right)$
is released and both ingress $i$ and egress $j$ are idle, and the
look-ahead admission rule is satisfied} \State$t_{m}^{k}\left(i,j\right)\leftarrow$
earliest feasible time \State $T_{m}^{k}\left(i,j\right)\leftarrow t_{m}^{k}\left(i,j\right)+\delta+\frac{d_{m}^{k}\left(i,j\right)}{r^{k}}$
\State Add $\left(i,j,t_{m}^{k}\left(i,j\right)\right)$ to $S_{m}^{k}$\State
Remove $f_{m}^{k}\left(i,j\right)$ from $\mathcal{F}^{k}$\EndIf\EndFor\EndWhile\EndFor
\end{algorithmic}
\end{algorithm}

Subsequently, the algorithm enters the flow allocation phase (Lines
3-14). For each core $k\in\mathcal{K}$, the algorithm maintains a
prefix workload state $\mathcal{X}_{1:m}^{k}$ that records the cumulative
traffic load and cumulative number of nonzero flow entries assigned
to core $k$ for the first $m$ coflows. Initially, $\mathcal{X}_{1:0}^{k}$
is set to zero for every core $k$ (Line 3). For each coflow $C_{m}$
processed sequentially (Line 4), the workload state $\mathcal{X}_{1:m}^{k}$
is initialized from $\mathcal{X}_{1:m-1}^{k}$, and the matrix $D_{m}^{k}$
is initialized to the zero matrix for every core $k$ (Line 5). The
set of nonzero flows in $C_{m}$ is then defined as $\mathcal{F}_{m}=\left\{ f_{m}\left(i,j\right)\mid d_{m}\left(i,j\right)>0\right\} $
(Line 6), and these flows are sorted in non-increasing order of their
sizes $d_{m}\left(i,j\right)$ (Line 7). For each flow $f_{m}\left(i,j\right)\in\mathcal{F}_{m}$
(Line 8), the algorithm tentatively assigns it to each core and evaluates
the resulting prefix workload measure $\varPhi^{k}\left(\mathcal{X}_{1:m}^{k}\oplus f_{m}\left(i,j\right)\right)$.
The selected core is $k^{*}=\arg\min_{k\in\mathcal{K}}\varPhi^{k}\left(\mathcal{X}_{1:m}^{k}\oplus f_{m}\left(i,j\right)\right)$
(Line 9). The entire flow $f_{m}\left(i,j\right)$ is then allocated
to core $k^{*}$, i.e., $f_{m}^{k^{*}}\left(i,j\right)\leftarrow f_{m}\left(i,j\right)$,
equivalently $d_{m}^{k^{*}}\left(i,j\right)=d_{m}\left(i,j\right)$
(Line 10). Accordingly, the allocation matrix $D_{m}^{k^{*}}$ is
updated by adding the demand $d_{m}\left(i,j\right)$ to entry $\left(i,j\right)$,
i.e., $D_{m}^{k^{*}}\leftarrow D_{m}^{k^{*}}+d_{m}\left(i,j\right)E^{ij}$,
where $E^{ij}$ is the matrix with a single one at entry $\left(i,j\right)$
and zeros elsewhere (Line 11). Finally, $\mathcal{X}_{1:m}^{k^{*}}$
is updated to include this flow (Line 12). After all flows in $C_{m}$
have been allocated, the matrices $\left\{ D_{m}^{k}\right\} _{k\in\mathcal{K}}$
represent the inter-core allocation of coflow $C_{m}$.

In the final phase, intra-core circuit scheduling is performed independently
on each core (Lines 15-30). For each core $k\in\mathcal{K}$, the
circuit schedule $S_{m}^{k}$ of each coflow $C_{m}$ is initialized
to the empty set (Lines 16-18). The set $\mathcal{F}^{k}\triangleq\bigcup_{m\in\mathcal{M}}\left\{ f_{m}^{k}\left(i,j\right)\mid d_{m}^{k}\left(i,j\right)>0\right\} $
is then constructed, containing all subflows allocated to core $k$
(Line 19). While $\mathcal{F}^{k}$ is non-empty (Line 20), the scheduler
scans the subflows in $\mathcal{F}^{k}$ according to the global coflow
order (Line 21). If no subflow can be admitted at the current scheduling
decision point, the scheduler advances to the next event time, either
the next coflow release time or the next subflow completion time.
A subflow $f_{m}^{k}\left(i,j\right)$ is admitted if it has been
released, its ingress port $i$ and egress port $j$ are both idle,
and the look-ahead admission rule is satisfied (Line 22). Once a subflow
is admitted, its circuit setup time $t_{m}^{k}\left(i,j\right)$ is
set to the earliest feasible time satisfying the release time, port
availability, and look-ahead admission rule (Line 23). Its completion
time is then computed as $T_{m}^{k}\left(i,j\right)=t_{m}^{k}\left(i,j\right)+\delta+\frac{d_{m}^{k}\left(i,j\right)}{r^{k}}$
(Line 24). The scheduled subflow is recorded in $S_{m}^{k}$ (Line
25) and removed from $\mathcal{F}^{k}$ (Line 26). This process continues
until all subflows assigned to core $k$ have been scheduled.

\subsection{Analysis of Performance Guarantees\label{sec:Theoretical-Analysis_Multiple}}

For each coflow $C_{m}$ and each port $p\in\mathcal{I}\cup\mathcal{J}$,
let $\rho_{m,p}$ denote the traffic load incident to port $p$, and
let $\tau_{m,p}$ denote the number of nonzero flows incident to port
$p$. For any prefix $L_{m}=\left\{ 1,\ldots,m\right\} $ in the LP-guided
order, define $\rho_{1:m,p}\triangleq\sum_{\ell=1}^{m}\rho_{\ell,p}$
and $\tau_{1:m,p}\triangleq\sum_{\ell=1}^{m}\tau_{\ell,p}$. Let $\rho_{1:m}\triangleq\max_{p\in\mathcal{I}\cup\mathcal{J}}\rho_{1:m,p}$
and $\tau_{1:m}\triangleq\max_{p\in\mathcal{I}\cup\mathcal{J}}\tau_{1:m,p}$.
Let $R=\sum_{k\in\mathcal{K}}r^{k}$ be the aggregated port transmission
rate, $r_{\max}\triangleq\max_{k\in\mathcal{K}}r^{k}$, and $t_{0}\left(m\right)\triangleq\max_{q\leq m}a_{q}$
be the latest release time among the first $m$ coflows.

\subsubsection{Derivation of Ordering-Phase Prefix Bound}

\begin{lemma}[Transmission-Capacity Prefix Bound] For any prefix
set $L_{m}=\left\{ 1,\ldots,m\right\} $ and any fixed port $p\in\mathcal{I}\cup\mathcal{J}$,
$\sum_{\ell=1}^{m}\rho_{\ell,p}\widetilde{T}_{\ell}\ge\frac{1}{2R}\left(\sum_{\ell=1}^{m}\rho_{\ell,p}\right)^{2}$.
Consequently, $\rho_{1:m}\le2R\widetilde{T}_{m}$.\label{transmission-capacity-prefix-bound}\end{lemma}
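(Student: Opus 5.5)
We follow the standard Queyranne-type parallel-inequality argument, using only the transmission-capacity constraint (\ref{eq:lp-trans}) and the ordering constraint (\ref{eq:lp-order}) satisfied by the LP optimum $\bigl(\widetilde{T}_{m},\widetilde{x}_{m,m'}\bigr)$. Fix the port $p$ and the prefix $L_{m}$. For each $\ell\in L_{m}$, constraint (\ref{eq:lp-trans}) gives
\[
R\widetilde{T}_{\ell}\ge\rho_{\ell,p}+\sum_{\ell'\neq\ell}\rho_{\ell',p}\widetilde{x}_{\ell',\ell}\ge\rho_{\ell,p}+\sum_{\ell'\in L_{m},\,\ell'\neq\ell}\rho_{\ell',p}\widetilde{x}_{\ell',\ell}.
\]
The second inequality drops the coflows outside the prefix, which is valid because all $\rho$ and $\widetilde{x}$ are nonnegative.

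First, we multiply this inequality by $\rho_{\ell,p}\ge0$ and sum over $\ell\in L_{m}$:
\[
R\sum_{\ell=1}^{m}\rho_{\ell,p}\widetilde{T}_{\ell}\ge\sum_{\ell=1}^{m}\rho_{\ell,p}^{2}+\sum_{\ell\neq\ell'\in L_{m}}\rho_{\ell,p}\rho_{\ell',p}\widetilde{x}_{\ell',\ell}.
\]
Next, we group the cross terms into unordered pairs $\{\ell,\ell'\}$. Each pair contributes $\rho_{\ell,p}\rho_{\ell',p}\bigl(\widetilde{x}_{\ell',\ell}+\widetilde{x}_{\ell,\ell'}\bigr)=\rho_{\ell,p}\rho_{\ell',p}$ by (\ref{eq:lp-order}). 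The right-hand side therefore equals
\[
\sum_{\ell}\rho_{\ell,p}^{2}+\sum_{\ell<\ell'}\rho_{\ell,p}\rho_{\ell',p}=\frac{1}{2}\Bigl(\sum_{\ell}\rho_{\ell,p}\Bigr)^{2}+\frac{1}{2}\sum_{\ell}\rho_{\ell,p}^{2}\ge\frac{1}{2}\Bigl(\sum_{\ell=1}^{m}\rho_{\ell,p}\Bigr)^{2}.
\]
Dividing by $R$ yields the first claim. This pairing step is the only real content of the argument; the rest is bookkeeping.

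For the consequence, we use the re-indexing $\widetilde{T}_{1}\le\cdots\le\widetilde{T}_{m}$. This gives
\[
\sum_{\ell=1}^{m}\rho_{\ell,p}\widetilde{T}_{\ell}\le\widetilde{T}_{m}\,\rho_{1:m,p}.
\]
Combining this with the first claim gives $\widetilde{T}_{m}\rho_{1:m,p}\ge\frac{1}{2R}\rho_{1:m,p}^{2}$. If $\rho_{1:m,p}>0$, dividing gives $\rho_{1:m,p}\le2R\widetilde{T}_{m}$. If $\rho_{1:m,p}=0$, the inequality holds trivially because $\widetilde{T}_{m}\ge0$, which follows from (\ref{eq:lp-release}) with $a_{m}\ge0$. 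Taking the maximum over $p$ gives $\rho_{1:m}\le2R\widetilde{T}_{m}$.

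We do not expect any genuine obstacle. The points that need care are the nonnegativity used to discard coflows outside the prefix, and the exact use of $x_{m,m'}+x_{m',m}=1$ so that the cross terms contribute exactly half of $\bigl(\sum_{\ell}\rho_{\ell,p}\bigr)^{2}-\sum_{\ell}\rho_{\ell,p}^{2}$.
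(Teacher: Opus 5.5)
Your proposal is correct and follows the paper's proof essentially step for step: restrict to the prefix using nonnegativity, weight each constraint by $\rho_{\ell,p}$ and sum, collapse the cross terms via $\widetilde{x}_{\ell,q}+\widetilde{x}_{q,\ell}=1$, then apply the LP-sorted order $\widetilde{T}_{\ell}\le\widetilde{T}_{m}$. You also handle the case $\rho_{1:m,p}=0$ explicitly, which the paper leaves implicit when it divides.
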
 
\begin{IEEEproof}
By the transmission-capacity constraints in Eq. \ref{eq:lp-trans},
for each $\ell\in L_{m}$, 
\[
\widetilde{T}_{\ell}\ge\frac{1}{R}\left(\rho_{\ell,p}+\sum_{q\neq\ell}\rho_{q,p}\widetilde{x}_{q,\ell}\right).
\]

Since all terms are nonnegative, we may restrict the summation to
$q\in L_{m}\setminus\left\{ \ell\right\} $. Multiplying both sides
by $\rho_{\ell,p}$ and summing over $\ell=1,\ldots,m$, we obtain
\[
\sum_{\ell=1}^{m}\rho_{\ell,p}\widetilde{T}_{\ell}\ge\frac{1}{R}\left(\sum_{\ell=1}^{m}\rho_{\ell,p}^{2}+\sum_{\ell=1}^{m}\sum_{\substack{q=1\\
q\neq\ell
}
}^{m}\rho_{\ell,p}\rho_{q,p}\widetilde{x}_{q,\ell}\right).
\]

For any pair $1\le\ell<q\le m$, the LP ordering constraint gives
$\widetilde{x}_{\ell,q}+\widetilde{x}_{q,\ell}=1$. Therefore, 
\[
\rho_{\ell,p}\rho_{q,p}\widetilde{x}_{q,\ell}+\rho_{q,p}\rho_{\ell,p}\widetilde{x}_{\ell,q}=\rho_{\ell,p}\rho_{q,p}.
\]

Summing over all unordered pairs in $L_{m}$, we have 
\[
\sum_{\ell=1}^{m}\sum_{\substack{q=1\\
q\neq\ell
}
}^{m}\rho_{\ell,p}\rho_{q,p}\widetilde{x}_{q,\ell}=\sum_{1\le\ell<q\le m}\rho_{\ell,p}\rho_{q,p}.
\]

Hence, 
\[
\begin{aligned}\sum_{\ell=1}^{m}\rho_{\ell,p}\widetilde{T}_{\ell} & \ge\frac{1}{R}\left(\sum_{\ell=1}^{m}\rho_{\ell,p}^{2}+\sum_{1\le\ell<q\le m}\rho_{\ell,p}\rho_{q,p}\right)\\
 & \ge\frac{1}{2R}\left(\sum_{\ell=1}^{m}\rho_{\ell,p}\right)^{2}.
\end{aligned}
\]

Since the coflows are re-indexed in non-decreasing order of LP completion
values, we have $\widetilde{T}_{\ell}\le\widetilde{T}_{m}$, $\forall\ell\le m$.
Thus, 
\[
\widetilde{T}_{m}\sum_{\ell=1}^{m}\rho_{\ell,p}\ge\frac{1}{2R}\left(\sum_{\ell=1}^{m}\rho_{\ell,p}\right)^{2},
\]
which implies 
\[
\sum_{\ell=1}^{m}\rho_{\ell,p}\le2R\widetilde{T}_{m}.
\]

Since $\rho_{1:m,p}=\sum_{\ell=1}^{m}\rho_{\ell,p}$, we have $\rho_{1:m,p}\le2R\widetilde{T}_{m}$.
Taking the maximum over all $p\in\mathcal{I}\cup\mathcal{J}$ gives
\[
\rho_{1:m}\le2R\widetilde{T}_{m}.
\]

This completes the proof. 
\end{IEEEproof}
\begin{lemma}[Reconfiguration-Capacity Prefix Bound] For any prefix
set $L_{m}=\left\{ 1,\ldots,m\right\} $ and any fixed port $p\in\mathcal{I}\cup\mathcal{J}$,
$\sum_{\ell=1}^{m}\tau_{\ell,p}\widetilde{T}_{\ell}\ge\frac{\delta}{2K}\left(\sum_{\ell=1}^{m}\tau_{\ell,p}\right)^{2}$.
Consequently, $\tau_{1:m}\le\frac{2K}{\delta}\widetilde{T}_{m}$.\label{reconfiguration-capacity-prefix-bound}\end{lemma}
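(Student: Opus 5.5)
The proof mirrors that of Lemma \ref{transmission-capacity-prefix-bound}. The only changes are that the loads $\rho_{\ell,p}$ are replaced by the nonzero-entry counts $\tau_{\ell,p}$, the transmission-capacity constraints (\ref{eq:lp-trans}) by the reconfiguration-capacity constraints (\ref{eq:lp-reconfig}), and the rate $R$ by the aggregate reconfiguration rate $K/\delta$. I fix a prefix $L_m$ and a port $p$. For each $\ell\in L_m$, constraint (\ref{eq:lp-reconfig}) at the LP optimum gives
\[
\widetilde{T}_{\ell}\ge\frac{\delta}{K}\left(\tau_{\ell,p}+\sum_{q\neq\ell}\tau_{q,p}\widetilde{x}_{q,\ell}\right).
\]
All terms are nonnegative, so I may drop the indices $q\notin L_m$.

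Next I multiply both sides by $\tau_{\ell,p}\ge0$ and sum over $\ell=1,\ldots,m$. The right-hand side then contains $\sum_{\ell}\tau_{\ell,p}^{2}$ plus a double sum over ordered pairs $q\neq\ell$ in $L_m$ of $\tau_{\ell,p}\tau_{q,p}\widetilde{x}_{q,\ell}$. I group these terms by unordered pairs $\{\ell,q\}$. Constraint (\ref{eq:lp-order}) gives $\widetilde{x}_{\ell,q}+\widetilde{x}_{q,\ell}=1$, so each pair contributes exactly $\tau_{\ell,p}\tau_{q,p}$. This yields
\[
\sum_{\ell=1}^{m}\tau_{\ell,p}\widetilde{T}_{\ell}\ge\frac{\delta}{K}\left(\sum_{\ell=1}^{m}\tau_{\ell,p}^{2}+\sum_{1\le\ell<q\le m}\tau_{\ell,p}\tau_{q,p}\right)\ge\frac{\delta}{2K}\left(\sum_{\ell=1}^{m}\tau_{\ell,p}\right)^{2}.
\]
The last step holds because $\bigl(\sum_\ell\tau_{\ell,p}\bigr)^2=\sum_\ell\tau_{\ell,p}^2+2\sum_{\ell<q}\tau_{\ell,p}\tau_{q,p}$, and the bracketed expression is at least half of this. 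That proves the first claim.

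For the consequence, I use the re-indexing by LP values: $\widetilde{T}_\ell\le\widetilde{T}_m$ for all $\ell\le m$. The left-hand side is therefore at most $\widetilde{T}_m\,\tau_{1:m,p}$, which gives
\[
\widetilde{T}_m\,\tau_{1:m,p}\ge\frac{\delta}{2K}\,\tau_{1:m,p}^{2}.
\]
If $\tau_{1:m,p}=0$, the bound $\tau_{1:m,p}\le\frac{2K}{\delta}\widetilde{T}_m$ is trivial, since $\widetilde{T}_m\ge0$ by (\ref{eq:lp-release}). Otherwise I divide by $\tau_{1:m,p}>0$ to get the same bound. Taking the maximum over $p$ gives $\tau_{1:m}\le\frac{2K}{\delta}\widetilde{T}_m$. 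There is no real obstacle here. The only points needing care are the pairing step via (\ref{eq:lp-order}) and the zero case in the division, and both are handled exactly as in Lemma \ref{transmission-capacity-prefix-bound}. I also note that $\delta>0$ is assumed, since the statement divides by $\delta$.
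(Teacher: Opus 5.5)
Your proposal is correct and matches the paper's proof: the paper reruns the pairwise-ordering argument of Lemma \ref{transmission-capacity-prefix-bound} with $\rho_{\ell,p}$ replaced by $\tau_{\ell,p}$ and $R$ by $K/\delta$, which is exactly what you write out in full. Your explicit treatment of the case $\tau_{1:m,p}=0$ (via $\widetilde{T}_m\ge a_m\ge 0$) is a small detail the paper leaves implicit.
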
 
\begin{IEEEproof}
The proof follows the same pairwise-ordering argument as in Lemma
\ref{transmission-capacity-prefix-bound}. By the reconfiguration-capacity
constraints in Eq. \ref{eq:lp-reconfig}, for each $\ell\in L_{m}$,
\[
\widetilde{T}_{\ell}\ge\frac{\delta}{K}\left(\tau_{\ell,p}+\sum_{q\neq\ell}\tau_{q,p}\widetilde{x}_{q,\ell}\right).
\]

Repeating the same argument as in the transmission-capacity bound
gives 
\[
\sum_{\ell=1}^{m}\tau_{\ell,p}\le\frac{2K}{\delta}\widetilde{T}_{m}.
\]

Since $\tau_{1:m,p}=\sum_{\ell=1}^{m}\tau_{\ell,p}$, we have $\tau_{1:m,p}\le\frac{2K}{\delta}\widetilde{T}_{m}$.
Taking the maximum over all $p\in\mathcal{I}\cup\mathcal{J}$ gives
\[
\tau_{1:m}\le\frac{2K}{\delta}\widetilde{T}_{m}.
\]

This completes the proof. 
\end{IEEEproof}

\subsubsection{Derivation of Allocation-Phase Prefix Bound}

\begin{lemma}[Allocation-Phase Prefix Bound] For any $m\in\mathcal{M}$,
the prefix workload states $\left\{ \mathcal{X}_{1:m}^{k}\right\} _{k\in\mathcal{K}}$
generated by the allocation phase satisfy $\max_{k\in\mathcal{K}}\varPhi^{k}\left(\mathcal{X}_{1:m}^{k}\right)\le\frac{\rho_{1:m}}{r_{\max}}+\tau_{1:m}\delta$.\label{allocation-phase-prefix-bound}\end{lemma}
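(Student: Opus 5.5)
The plan is a direct greedy, list-scheduling style argument. It rests on two simple facts. First, the prefix workload measure $\varPhi^{k}$ of a core changes only when that core receives a flow. Second, for the fastest core, which has rate $r_{\max}$, the measure can never exceed the right-hand side of the statement, whatever flows it has been given.

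\textbf{Step 1 (monotonicity in the processed set).} Consider any point of the allocation phase at which only flows from coflows $1,\ldots,m$ have been processed; this is true throughout the processing of coflows $1,\ldots,m$. At such a point, for every core $k$ and port $p$, the per-core cumulative load on port $p$ is a sum over a subset of the processed flows incident to $p$. Hence it is at most $\rho_{1:m,p}$. Likewise the per-core circuit count on $p$ is at most $\tau_{1:m,p}$. Both bounds also hold for the tentative state $\mathcal{X}^{k}\oplus f_{m'}(i,j)$ with $m'\le m$, since the flow being tentatively added belongs to the first $m$ coflows.

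\textbf{Step 2 (the greedy choice is dominated by the fastest core).} Fix a core $k$ and let $f$ be the last flow (from some coflow $m'\le m$) that the algorithm assigned to $k$ up to the end of processing coflow $m$. If no such flow exists, $\varPhi^{k}=0$ and there is nothing to prove. Let $k_{\max}$ be a core with $r^{k_{\max}}=r_{\max}$. By the argmin rule in Line 9, the value of $\varPhi^{k}$ right after assigning $f$ is at most $\varPhi^{k_{\max}}(\mathcal{X}^{k_{\max}}\oplus f)$. Step 1 bounds the latter:
\[
\varPhi^{k_{\max}}(\mathcal{X}^{k_{\max}}\oplus f)\le\max_{p}\Big(\frac{\rho_{1:m,p}}{r_{\max}}+\tau_{1:m,p}\delta\Big)\le\frac{\rho_{1:m}}{r_{\max}}+\tau_{1:m}\delta .
\]
The last inequality uses $\max_{p}(a_{p}+b_{p})\le\max_{p}a_{p}+\max_{p}b_{p}$.

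\textbf{Step 3 (conclude).} Core $k$ receives no flow after $f$ within the first $m$ coflows. Therefore $\varPhi^{k}(\mathcal{X}^{k}_{1:m})$ equals the value of $\varPhi^{k}$ right after $f$ was assigned, which Step 2 has bounded. Taking the maximum over $k$ gives the lemma.

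The only delicate point is Step 1 when $f$ belongs to an earlier coflow $m'<m$. In that case Step 2 first gives a bound in terms of $\rho_{1:m'}$ and $\tau_{1:m'}$. This must be lifted to the bound for $m$ using $\rho_{1:m'}\le\rho_{1:m}$ and $\tau_{1:m'}\le\tau_{1:m}$, which holds because all loads and counts are nonnegative. Otherwise the argument is routine, and the main care is in stating clearly that states are compared at the moment of the last assignment to $k$.
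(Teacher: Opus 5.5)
Your proposal is correct and follows the paper's proof essentially step for step. Like the paper, you fix the last flow assigned to a core, invoke the greedy argmin rule against another core's tentative state, bound that tentative state by the global prefix quantities $\rho_{1:m,p}$ and $\tau_{1:m,p}$, and then bound $\max_p$ of a sum by the sum of the maxima. The only cosmetic difference is that you compare directly against a fastest core $k_{\max}$, whereas the paper compares against every core and then minimizes over $r^{k}$; both give the same $1/r_{\max}$ factor.
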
 
\begin{IEEEproof}
Consider any nonempty core $k_{1}$ after the first $m$ coflows have
been processed. Let $\bar{f}^{k_{1}}\left(i,j\right)$ be the last
flow allocated to core $k_{1}$ among these coflows, and let $\mathcal{\bar{X}}^{k_{1}}$
be the workload state of core $k_{1}$ immediately before allocating
this flow. Then, $\mathcal{X}_{1:m}^{k_{1}}=\mathcal{\bar{X}}^{k_{1}}\oplus\bar{f}^{k_{1}}\left(i,j\right)$.

When $\bar{f}^{k_{1}}\left(i,j\right)$ is allocated, the greedy rule
selects the core with the minimum resulting prefix workload. Hence,
for any core $k_{2}\in\mathcal{K}$, 
\[
\varPhi^{k_{1}}\left(\mathcal{\bar{X}}^{k_{1}}\oplus\bar{f}^{k_{1}}\left(i,j\right)\right)\le\varPhi^{k_{2}}\left(\mathcal{\bar{X}}^{k_{2}}\oplus\bar{f}^{k_{1}}\left(i,j\right)\right),
\]
where $\mathcal{\bar{X}}^{k_{2}}$ is the workload state of core $k_{2}$
at that time.

For any core $k_{2}$, the tentative workload state $\mathcal{\bar{X}}^{k_{2}}\oplus\bar{f}^{k_{1}}\left(i,j\right)$
contains only workload contributed by flows among the first $m$ coflows.
Hence, for every port $p\in\mathcal{I}\cup\mathcal{J}$, its cumulative
traffic load is at most $\rho_{1:m,p}$, and its cumulative reconfiguration
count is at most $\tau_{1:m,p}$. Therefore, 
\[
\varPhi^{k_{2}}\left(\bar{\mathcal{X}}^{k_{2}}\oplus\bar{f}^{k_{1}}\left(i,j\right)\right)\le\max_{p\in\mathcal{I}\cup\mathcal{J}}\left(\frac{\rho_{1:m,p}}{r^{k_{2}}}+\tau_{1:m,p}\delta\right).
\]

By the definitions of $\rho_{1:m}$ and $\tau_{1:m}$, we further
have 
\[
\max_{p\in\mathcal{I}\cup\mathcal{J}}\left(\frac{\rho_{1:m,p}}{r^{k_{2}}}+\tau_{1:m,p}\delta\right)\le\frac{\rho_{1:m}}{r^{k_{2}}}+\tau_{1:m}\delta.
\]

Combining the above inequalities gives, for every $k_{2}\in\mathcal{K}$,
\[
\varPhi^{k_{1}}\left(\mathcal{X}_{1:m}^{k_{1}}\right)\le\frac{\rho_{1:m}}{r^{k_{2}}}+\tau_{1:m}\delta.
\]

Since this holds for every $k_{2}\in\mathcal{K}$, we take the minimum
over $k_{2}$ and obtain 
\[
\varPhi^{k_{1}}\left(\mathcal{X}_{1:m}^{k_{1}}\right)\le\min_{k\in\mathcal{K}}\left(\frac{\rho_{1:m}}{r^{k}}+\tau_{1:m}\delta\right).
\]

Using $\min_{k\in\mathcal{K}}\frac{1}{r^{k}}=\frac{1}{r_{\max}},$
we get 
\[
\varPhi^{k_{1}}\left(\mathcal{X}_{1:m}^{k_{1}}\right)\le\frac{\rho_{1:m}}{r_{\max}}+\tau_{1:m}\delta.
\]

Because $k_{1}$ is an arbitrary non-empty core, and an empty core
has zero prefix workload measure, taking the maximum over all cores
gives 
\[
\max_{k\in\mathcal{K}}\varPhi^{k}\left(\mathcal{X}_{1:m}^{k}\right)\le\frac{\rho_{1:m}}{r_{\max}}+\tau_{1:m}\delta.
\]

This completes the proof. 
\end{IEEEproof}

\subsubsection{Derivation of Scheduling-Phase Prefix Bound}

\begin{lemma}[Scheduling-Phase Prefix Bound] For any $m\in\mathcal{M}$,
the completion time of coflow $C_{m}$ satisfies $T_{m}=\max_{k\in\mathcal{K}}T_{m}^{k}\le t_{0}\left(m\right)+2\max_{k\in\mathcal{K}}\varPhi^{k}\left(\mathcal{X}_{1:m}^{k}\right)$.\label{scheduling-phase-prefix-bound}\end{lemma}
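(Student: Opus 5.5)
We fix a core $k$ and a coflow $C_m$, and bound $T_m^{k}$ by $t_0(m)+2\varPhi^{k}(\mathcal{X}_{1:m}^{k})$; taking the maximum over $k$ then gives the statement. Since every coflow in $L_m$ is released by $t_0(m)$, we only need to examine the schedule on core $k$ after time $t_0(m)$. Let $f_m^{k}(i,j)$ be the subflow of $C_m$ on core $k$ that finishes last, so $T_m^{k}=t_m^{k}(i,j)+\delta+d_m^{k}(i,j)/r^{k}$. The plan is a standard greedy list-scheduling (Graham/open-shop type) charging argument. We split the interval $[t_0(m),t_m^{k}(i,j))$ into times when ingress port $i$ is busy (reconfiguring or transmitting) and times when egress port $j$ is busy. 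We then argue that at every instant of this interval at least one of $i,j$ is busy. If both were idle at some such instant, then $f_m^{k}(i,j)$ would be released and unscheduled, and it would be admissible. The reason is that any eligible subflow with higher priority on port $i$ or $j$ that blocks it comes from $L_m$ (by re-indexing) and would itself be startable. Work conservation would therefore start something on these ports immediately, a contradiction.

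Next we classify the busy time on port $i$ (and likewise on $j$) after $t_0(m)$. It consists of circuits from coflows in $L_m$, and possibly one lower-priority subflow already running at time $t_0(m)$. The look-ahead rule is what removes that possible residual. Any lower-priority subflow using port $i$ or $j$ that was admitted before $t_0(m)$ must finish before the release of every future higher-priority coflow with a subflow on the same port of core $k$. In particular it finishes before $a_m\le t_0(m)$, or before the release of the relevant $q\le m$. After $t_0(m)$ all of $L_m$ is released, so a lower-priority subflow on port $i$ can be admitted only if it does not conflict with an eligible subflow from $L_m$ on that port. While $f_m^{k}(i,j)$ is pending, such a subflow always exists, namely $f_m^{k}(i,j)$ itself. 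Hence every busy interval on $i$ or $j$ in $[t_0(m),t_m^{k}(i,j))$ belongs to some subflow of $L_m$ on core $k$. Each such subflow contributes $\delta+d/r^{k}$. The total busy time on port $i$ is therefore at most $\rho_{1:m,i}^{k}/r^{k}+\tau_{1:m,i}^{k}\delta$ minus the contribution of $f_m^{k}(i,j)$, and similarly for $j$.

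Combining the two steps, we get
\[
t_m^{k}(i,j)-t_0(m)\le\sum_{p\in\{i,j\}}\Bigl(\tfrac{\rho_{1:m,p}^{k}}{r^{k}}+\tau_{1:m,p}^{k}\delta\Bigr)-2\Bigl(\delta+\tfrac{d_m^{k}(i,j)}{r^{k}}\Bigr).
\]
Adding $\delta+d_m^{k}(i,j)/r^{k}$ to both sides, we obtain
\[
T_m^{k}\le t_0(m)+2\varPhi^{k}(\mathcal{X}_{1:m}^{k}),
\]
since each bracket is at most $\varPhi^{k}(\mathcal{X}_{1:m}^{k})$.

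The main obstacle is the second step: making rigorous that the non-preemptive, look-ahead scheduler never leaves a lower-priority circuit occupying $i$ or $j$ after $t_0(m)$. We also must handle one more case, where a subflow of $L_m$ started before $t_0(m)$ is still running past $t_0(m)$. Its leftover is still counted in the $L_m$ workload, so it is harmless. I would first try to state the look-ahead rule precisely as the invariant above, and prove it by induction over scheduler decision points.
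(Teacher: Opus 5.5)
Your outline is the paper's proof almost step for step: the last-finishing subflow of $C_m$ on core $k$, the window $[t_0(m),t^{\star})$, the claim that one of $i,j$ is always busy with prefix traffic, and charging each port's busy time to $\rho^{k}_{1:m,p}/r^{k}+\tau^{k}_{1:m,p}\delta$ minus the subflow's own $\delta+d^{k}_{m}(i,j)/r^{k}$. But your first step has a genuine gap. The paper's proof rests on the same unproved assertion (``otherwise the subflow would have been admitted earlier''), so it does not supply the fix.

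You claim that any higher-priority eligible subflow blocking $f^{k}_{m}(i,j)$ on $i$ or $j$ ``would itself be startable.'' It need not be: a pending prefix subflow $f^{k}_{q}(i,j')$ with $q<m$ may be waiting because $j'$ is occupied. Your second step needs ``eligible'' to mean ``released and unfinished,'' since you let $f^{k}_{m}(i,j)$ fence off port $i$ from lower-priority traffic even while $j$ is busy, i.e., while it is not startable. Under that same reading, $f^{k}_{q}(i,j')$ fences off $i$ from $f^{k}_{m}(i,j)$, while $f^{k}_{m}(i,j)$ fences off $j$ from lower-priority traffic. So both ports can sit idle, and $t^{\star}-t_0(m)\le B_i+B_j$ has no basis.

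If instead ``eligible'' means ``startable now'' (the reading behind the paper's remark that with zero release times the scheduler is plain greedy work-conserving), your first step goes through but the second collapses. While $j$ is held by a long higher-priority subflow, $f^{k}_{m}(i,j)$ is not eligible. Lower-priority subflows can then alternately seize $i$ and $j$, delaying it by an amount unrelated to the prefix workload. So the step you flagged as the main obstacle is not the real one; your two steps rely on incompatible readings of the admission rule.

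Under the reservation reading, the inequality itself is false, so no sharper accounting on ports $i,j$ can close the gap. Take $K=1$, $r^{1}=1$, all release times $0$, ingress ports $u_1,\dots,u_{n+1}$ and egress ports $v_1,\dots,v_n$. Use unit-size single-flow coflows in the priority order $g_1,\dots,g_{2n}$, with $g_{2\ell-1}=(u_\ell,v_\ell)$ and $g_{2\ell}=(u_{\ell+1},v_\ell)$.

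\begin{itemize}
\item Each $g_\ell$ shares a port with the still-pending $g_{\ell-1}$, so the subflows run strictly one after another and $T_{2n}=2n(1+\delta)$.
\item Every port carries at most two unit subflows, so $2\varPhi^{1}(\mathcal{X}^{1}_{1:2n})=4(1+\delta)$. Everything here is prefix traffic, so the look-ahead rule plays no role.
\item With equal weights and $\delta\le 1$, this order is consistent with an optimal LP solution ($\widetilde{x}\equiv\tfrac12$, hence $\widetilde{T}_\ell=\tfrac32$ for all $\ell$, ties broken arbitrarily). So it is a legitimate input to the lemma.
\end{itemize}

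The missing ingredient is a bound on the time a port is held idle for a higher-priority subflow that is itself waiting on a third port. The chain shows this idle time can exceed $\varPhi$ by a factor of order $m$. Closing the gap therefore requires a different admission rule or a weaker bound, not a tighter charging argument.
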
 
\begin{IEEEproof}
Fix any core $k\in\mathcal{K}$ such that $D_{m}^{k}\neq\mathbf{0}$,
i.e., coflow $C_{m}$ has at least one nonzero subflow allocated to
core $k$. Let $\left(i^{\star},j^{\star}\right)$ be the port-pair
corresponding to the last completed subflow of $C_{m}$ on core $k$,
and let $d^{\star}=d_{m}^{k}\left(i^{\star},j^{\star}\right)>0$ denote
its size. Let $t^{\star}=t_{m}^{k}\left(i^{\star},j^{\star}\right)$
be the circuit establishment time of subflow $f_{m}^{k}\left(i^{\star},j^{\star}\right)$.

Under \textit{not-all-stop} reconfiguration, the subflow $f_{m}^{k}\left(i^{\star},j^{\star}\right)$
starts transmission at time $t^{\star}+\delta$ and completes at 
\[
T_{m}^{k}\left(i^{\star},j^{\star}\right)=t^{\star}+\delta+\frac{d^{\star}}{r^{k}}.
\]

Because $(i^{\star},j^{\star})$ corresponds to the last completed
subflow of $C_{m}$ on core $k$, we have 
\[
T_{m}^{k}=\underset{i,j}{\max}T_{m}^{k}\left(i,j\right)=T_{m}^{k}\left(i^{\star},j^{\star}\right).
\]

By the definition of $t_{0}\left(m\right)$, all coflows in the prefix
set $L_{m}=\left\{ 1,\ldots,m\right\} $ have already been released
by time $t_{0}\left(m\right)$. Now consider the interval $\left[t_{0}\left(m\right),t^{\star}\right)$.
By the look-ahead admission rule, a lower-priority subflow is not
admitted on a port if there exists any released unfinished higher-priority
subflow incident to that port. Moreover, it is not admitted if its
non-preemptive execution would cross the release time of any future
higher-priority coflow with a subflow requiring the same ingress or
egress port. Hence, after time $t_{0}\left(m\right)$, any subflow
that can block the last subflow $f_{m}^{k}\left(i^{\star},j^{\star}\right)$
on ingress port $i^{\star}$ or egress port $j^{\star}$ must belong
to the prefix $L_{m}$. Therefore, for any time $t\in\left[t_{0}\left(m\right),t^{\star}\right)$,
at least one of the two ports $i^{\star}$ and $j^{\star}$ must be
busy with prefix traffic. Otherwise, the subflow $f_{m}^{k}\left(i^{\star},j^{\star}\right)$
would have been admitted earlier, contradicting the definition of
$t^{\star}$.

Let $B_{i^{\star}}\left(t_{0}\left(m\right),t^{\star}\right)$ and
$B_{j^{\star}}\left(t_{0}\left(m\right),t^{\star}\right)$ denote
the total busy times of ports $i^{\star}$ and $j^{\star}$ over the
interval $\left[t_{0}\left(m\right),t^{\star}\right)$, respectively.
Port $i^{\star}$ can be busy during $\left[t_{0}\left(m\right),t^{\star}\right)$
for two reasons:

\textit{(1) Transmission busy time bound on $i^{\star}$. }Before
the circuit $\left(i^{\star},j^{\star}\right)$ is established, any
transmission incident to $i^{\star}$ must correspond to prefix traffic
assigned to core $k$, excluding the transmission of subflow $f_{m}^{k}\left(i^{\star},j^{\star}\right)$
itself. Hence, the total amount of prefix data that can be transmitted
through port $i^{\star}$ before time $t^{\star}$ is at most $\rho_{1:m,i^{\star}}^{k}-d^{\star}$,
and thus the total transmission busy time on $i^{\star}$ is at most
$\frac{\rho_{1:m,i^{\star}}^{k}-d^{\star}}{r^{k}}$.

\textit{(2) Reconfiguration busy time bound on $i^{\star}$.} Port
$i^{\star}$ is incident to at most $\tau_{1:m,i^{\star}}^{k}$ cumulative
nonzero prefix subflows assigned to core $k$. Since $(i^{\star},j^{\star})$
is the pair established at time $t^{\star}$, there can be at most
$\tau_{1:m,i^{\star}}^{k}-1$ circuit establishments involving $i^{\star}$
prior to $t^{\star}$. Each such establishment incurs a delay $\delta$
on the ports involved. Therefore, the total circuit establishment
time on $i^{\star}$ before $t^{\star}$ is at most $\left(\tau_{1:m,i^{\star}}^{k}-1\right)\delta$.

Combining the above two bounds yields 
\[
B_{i^{\star}}\left(t_{0}\left(m\right),t^{\star}\right)\leq\frac{\rho_{1:m,i^{\star}}^{k}-d^{\star}}{r^{k}}+\left(\tau_{1:m,i^{\star}}^{k}-1\right)\delta.
\]

By the same argument for the egress port $j^{\star}$, 
\[
B_{j^{\star}}\left(t_{0}\left(m\right),t^{\star}\right)\leq\frac{\rho_{1:m,j^{\star}}^{k}-d^{\star}}{r^{k}}+\left(\tau_{1:m,j^{\star}}^{k}-1\right)\delta.
\]

Since at least one of the two ports $i^{\star}$ and $j^{\star}$
is busy throughout $\left[t_{0}\left(m\right),t^{\star}\right)$,
the interval length is bounded by the sum of their busy times: 
\[
\begin{aligned}t^{\star} & -t_{0}\left(m\right)\leq B_{i^{\star}}\left(t_{0}\left(m\right),t^{\star}\right)+B_{j^{\star}}\left(t_{0}\left(m\right),t^{\star}\right)\\
 & \leq\frac{\rho_{1:m,i^{\star}}^{k}+\rho_{1:m,j^{\star}}^{k}-2d^{\star}}{r^{k}}+\left(\tau_{1:m,i^{\star}}^{k}+\tau_{1:m,j^{\star}}^{k}-2\right)\delta.
\end{aligned}
\]

Substituting this into $T_{m}^{k}=t^{\star}+\delta+\frac{d^{\star}}{r^{k}}$
gives 
\[
T_{m}^{k}\leq t_{0}\left(m\right)+\frac{\rho_{1:m,i^{\star}}^{k}+\rho_{1:m,j^{\star}}^{k}}{r^{k}}+\left(\tau_{1:m,i^{\star}}^{k}+\tau_{1:m,j^{\star}}^{k}\right)\delta.
\]

By the definition of $\varPhi^{k}\left(\mathcal{X}_{1:m}^{k}\right)$,
we have 
\[
\frac{\rho_{1:m,i^{\star}}^{k}}{r^{k}}+\tau_{1:m,i^{\star}}^{k}\delta\leq\varPhi^{k}\left(\mathcal{X}_{1:m}^{k}\right),
\]
and 
\[
\frac{\rho_{1:m,j^{\star}}^{k}}{r^{k}}+\tau_{1:m,j^{\star}}^{k}\delta\leq\varPhi^{k}\left(\mathcal{X}_{1:m}^{k}\right).
\]

Hence, 
\[
T_{m}^{k}\le t_{0}\left(m\right)+2\varPhi^{k}\left(\mathcal{X}_{1:m}^{k}\right).
\]

Finally, taking the maximum over all cores yields 
\[
T_{m}=\max_{k\in\mathcal{K}}T_{m}^{k}\le t_{0}\left(m\right)+2\max_{k\in\mathcal{K}}\varPhi^{k}\left(\mathcal{X}_{1:m}^{k}\right).
\]

This completes the proof. 
\end{IEEEproof}

\subsubsection{Derivation of Approximation Ratios}
\begin{thm}
Algorithm \ref{alg:alg1} achieves an $\left(8K+1\right)$-approximation
for minimizing the total weighted CCT in a heterogeneous $K$-core
OCS network with arbitrary release times.\label{theorem1} 
\end{thm}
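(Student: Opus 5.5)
We chain the three prefix lemmas and then compare against the LP lower bound. Fix any coflow $C_m$ in the re-indexed LP-guided order. By Lemma \ref{scheduling-phase-prefix-bound},
\[
T_m \le t_0(m) + 2\max_{k\in\mathcal{K}}\varPhi^k\left(\mathcal{X}^k_{1:m}\right),
\]
and by Lemma \ref{allocation-phase-prefix-bound} the maximum of the prefix workload measures is at most $\frac{\rho_{1:m}}{r_{\max}}+\tau_{1:m}\delta$. Hence
\[
T_m\le t_0(m)+\frac{2\rho_{1:m}}{r_{\max}}+2\tau_{1:m}\delta .
\]

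Next we substitute the ordering-phase bounds. Lemma \ref{transmission-capacity-prefix-bound} gives $\rho_{1:m}\le 2R\widetilde{T}_m$. Since $R=\sum_k r^k\le K r_{\max}$, we have $R/r_{\max}\le K$, so
\[
\frac{2\rho_{1:m}}{r_{\max}}\le \frac{4R}{r_{\max}}\widetilde{T}_m\le 4K\widetilde{T}_m .
\]
Lemma \ref{reconfiguration-capacity-prefix-bound} gives $\tau_{1:m}\le \frac{2K}{\delta}\widetilde{T}_m$, so $2\tau_{1:m}\delta\le 4K\widetilde{T}_m$. Together these yield
\[
T_m\le t_0(m)+8K\widetilde{T}_m .
\]
(If $\delta=0$, the reconfiguration term vanishes trivially.)

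The remaining step, and the only delicate one, is to bound $t_0(m)=\max_{q\le m}a_q$ by $\widetilde{T}_m$. For every $q\le m$, the release constraint (\ref{eq:lp-release}) gives $a_q\le \widetilde{T}_q$. The re-indexing by non-decreasing LP values gives $\widetilde{T}_q\le\widetilde{T}_m$. Hence $t_0(m)\le \widetilde{T}_m$. This works only because the global order is extracted from the LP values, which themselves dominate the release times. Therefore
\[
T_m\le (8K+1)\widetilde{T}_m .
\]

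Finally, we multiply by $w_m\ge 0$ and sum over $m$. The LP is a valid relaxation, so $\sum_m w_m\widetilde{T}_m\le\sum_m w_m T^*_m$. This gives
\[
\sum_m w_mT_m\le(8K+1)\,\mathrm{OPT}.
\]
The LP has polynomially many variables and constraints, and the allocation and scheduling phases are polynomial, so the algorithm runs in polynomial time.

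For zero release times, $t_0(m)=0$ and the same chain gives $8K$.
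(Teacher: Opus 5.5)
Your proposal is correct and follows the paper's proof: you combine Lemma~\ref{scheduling-phase-prefix-bound} and Lemma~\ref{allocation-phase-prefix-bound} with the ordering bounds $\rho_{1:m}\le 2R\widetilde{T}_m$ and $\tau_{1:m}\le\frac{2K}{\delta}\widetilde{T}_m$, use $R\le Kr_{\max}$ to reach $T_m\le t_0(m)+8K\widetilde{T}_m$, bound $t_0(m)\le\widetilde{T}_m$ through the release constraints and the LP-sorted order, and sum against the LP lower bound. The only differences are cosmetic: you apply the scheduling lemma before substituting the ordering bounds, and you add remarks on $\delta=0$ and on polynomial running time that the paper's proof does not include.
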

\begin{IEEEproof}
By Lemma \ref{transmission-capacity-prefix-bound} and Lemma \ref{reconfiguration-capacity-prefix-bound},
for any $m\in\mathcal{M}$, the prefix transmission and reconfiguration
workloads satisfy $\rho_{1:m}\le2R\widetilde{T}_{m}$ and $\tau_{1:m}\le\frac{2K}{\delta}\widetilde{T}_{m}$.
By Lemma \ref{allocation-phase-prefix-bound}, we have $\max_{k\in\mathcal{K}}\varPhi^{k}\left(\mathcal{X}_{1:m}^{k}\right)\le\frac{\rho_{1:m}}{r_{\max}}+\tau_{1:m}\delta,$
where $r_{\max}\triangleq\max_{k\in\mathcal{K}}r^{k}$.

Substituting the above prefix bounds yields 
\[
\max_{k\in\mathcal{K}}\varPhi^{k}\left(\mathcal{X}_{1:m}^{k}\right)\leq\frac{2R\widetilde{T}_{m}}{r_{\max}}+2K\widetilde{T}_{m}.
\]

Since $R=\sum_{k\in\mathcal{K}}r^{k}\le Kr_{\max}$, it follows that
$\frac{R}{r_{\max}}\le K$. Therefore, 
\[
\max_{k\in\mathcal{K}}\varPhi^{k}\left(\mathcal{X}_{1:m}^{k}\right)\leq4K\widetilde{T}_{m}.
\]

Furthermore, by Lemma \ref{scheduling-phase-prefix-bound}, the completion
time of coflow $C_{m}$ under Algorithm \ref{alg:alg1} satisfies
$T_{m}\le t_{0}\left(m\right)+2\max_{k\in\mathcal{K}}\varPhi^{k}\left(\mathcal{X}_{1:m}^{k}\right)$.
Combining the above inequalities yields 
\[
T_{m}\le t_{0}\left(m\right)+8K\widetilde{T}_{m}.
\]

Since the coflows are re-indexed in non-decreasing order of the LP
completion values, we have $\widetilde{T}_{q}\leq\widetilde{T}_{m}$,
$\forall q\leq m$. Moreover, by the release-time constraints in the
LP relaxation, we have $\widetilde{T}_{q}\geq a_{q}$, $\forall q\in\mathcal{M}$.
Hence, for every $q\leq m$, we obtain $a_{q}\leq\widetilde{T}_{q}\leq\widetilde{T}_{m}$.
Taking the maximum over all $q\leq m$, we obtain 
\[
t_{0}\left(m\right)=\max_{q\leq m}a_{q}\leq\widetilde{T}_{m}.
\]

Hence, 
\[
T_{m}\le\widetilde{T}_{m}+8K\widetilde{T}_{m}\leq\left(8K+1\right)\widetilde{T}_{m}.
\]

Multiplying both sides by $w_{m}$ and summing over all $m\in\mathcal{M}$,
we obtain 
\[
\sum_{m=1}^{M}w_{m}T_{m}\leq\left(8K+1\right)\sum_{m=1}^{M}w_{m}\widetilde{T}_{m}.
\]

Since the LP objective provides a lower bound on the optimal weighted
CCT of the original problem, we have, 
\[
\sum_{m=1}^{M}w_{m}\widetilde{T}_{m}\leq\sum_{m=1}^{M}w_{m}T_{m}^{*}.
\]

Combining the above inequalities gives 
\[
\sum_{m=1}^{M}w_{m}T_{m}\leq\left(8K+1\right)\sum_{m=1}^{M}w_{m}T_{m}^{*}.
\]

This completes the proof. 
\end{IEEEproof}
\begin{Corollary} Algorithm \ref{alg:alg1} achieves an $8K$-approximation
for minimizing the total weighted CCT when all coflows are released
at time zero i.e., $a_{m}=0$, and hence $t_{0}\left(m\right)=0$,
in a heterogeneous $K$-core OCS network. \end{Corollary}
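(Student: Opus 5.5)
The corollary follows by re-running the chain of inequalities in the proof of Theorem \ref{theorem1} with all release times set to zero, so that the additive $t_{0}\left(m\right)$ term disappears. First, I would check that every ingredient used there remains valid when $a_{m}=0$ for all $m$.

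The LP relaxation still contains the constraints $T_{m}\geq a_{m}=0$, which are now vacuous. Hence Lemmas \ref{transmission-capacity-prefix-bound} and \ref{reconfiguration-capacity-prefix-bound} apply unchanged and give $\rho_{1:m}\le2R\widetilde{T}_{m}$ and $\tau_{1:m}\le\frac{2K}{\delta}\widetilde{T}_{m}$. Lemma \ref{allocation-phase-prefix-bound} does not involve release times at all. Combining it with $R\le Kr_{\max}$ yields $\max_{k}\varPhi^{k}\left(\mathcal{X}_{1:m}^{k}\right)\le4K\widetilde{T}_{m}$, exactly as in the theorem.

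Next I would invoke Lemma \ref{scheduling-phase-prefix-bound} with $t_{0}\left(m\right)=\max_{q\le m}a_{q}=0$. This gives $T_{m}\le2\max_{k}\varPhi^{k}\left(\mathcal{X}_{1:m}^{k}\right)\le8K\widetilde{T}_{m}$ directly, without the extra $\widetilde{T}_{m}$ that the arbitrary-release proof used to absorb $t_{0}\left(m\right)$.

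The only point needing care is whether the busy-port argument of that lemma really starts at time $0$. When all coflows are released at time zero, the look-ahead rule never has a future higher-priority release to anticipate, and the scheduler is plain greedy work-conserving in priority order. So on $[0,t^{\star})$ at least one of $i^{\star},j^{\star}$ is busy with prefix traffic on core $k$: a lower-priority subflow cannot occupy these ports while the eligible higher-priority subflow $f_{m}^{k}(i^{\star},j^{\star})$ or other prefix subflows wait. I expect this to be the main (though mild) obstacle, and the lemma's proof with $t_{0}\left(m\right)=0$ covers it verbatim.

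Finally, I multiply $T_{m}\le8K\widetilde{T}_{m}$ by $w_{m}$ and sum over $m$. Using $\sum_{m}w_{m}\widetilde{T}_{m}\le\sum_{m}w_{m}T_{m}^{*}$ from the validity of the LP relaxation gives $\sum_{m}w_{m}T_{m}\le8K\sum_{m}w_{m}T_{m}^{*}$, which is the claimed $8K$-approximation.
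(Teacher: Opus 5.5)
Your proposal is correct and follows exactly the paper's route: with $t_{0}\left(m\right)=0$, the chain of Theorem \ref{theorem1} (Lemmas \ref{transmission-capacity-prefix-bound}--\ref{scheduling-phase-prefix-bound}) gives $T_{m}\le2\max_{k\in\mathcal{K}}\varPhi^{k}\left(\mathcal{X}_{1:m}^{k}\right)\le8K\widetilde{T}_{m}$, and weighting, summing and invoking the LP lower bound then yields the $8K$ ratio. One small correction to your side remark: at zero release only the future-release clause of the admission rule becomes vacuous, while the non-conflict clause (no lower-priority subflow may take a port needed by a waiting higher-priority one) is still what keeps non-prefix traffic off $i^{\star}$ and $j^{\star}$ --- a literally plain greedy list scheduler would not guarantee this --- but since you ultimately cite Lemma \ref{scheduling-phase-prefix-bound} with $t_{0}\left(m\right)=0$, your argument is unaffected.
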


Besides multi-core OCS networks, Algorithm \ref{alg:alg1} can also
be naturally adapted to multi-core EPS networks. In the EPS setting,
circuit reconfiguration is absent. Therefore, the reconfiguration-capacity
constraints in Eq. \ref{eq:lp-reconfig} are removed from the LP relaxation,
and the prefix workload measure used in the allocation phase contains
only the transmission workload. The overall algorithm framework remains
unchanged: coflows are first ordered globally, flows are then allocated
across cores, and each core finally schedules its allocated traffic
independently.

Consider an $H$-core EPS network, where each core $h\in\mathcal{H}=\left\{ 1,...,H\right\} $
provides per-port transmission rate $r^{h}$. Let $R=\sum_{h\in\mathcal{H}}r^{h}$
and $r_{\max}=\max_{h\in\mathcal{H}}r^{h}$. For the first $m$ coflows
and core $h$, define the EPS prefix transmission measure as $\varPsi^{h}\left(\mathcal{Y}_{1:m}^{h}\right)\triangleq\max_{p\in\mathcal{I}\cup\mathcal{J}}\frac{\rho_{1:m,p}^{h}}{r^{h}}$,
where $\mathcal{Y}_{1:m}^{h}$ records the cumulative traffic load
assigned to core $h$ among the first $m$ coflows. This measure is
the EPS counterpart of $\varPhi^{k}\left(\mathcal{X}_{1:m}^{k}\right)$
in the OCS setting, with the reconfiguration term removed. Let $\overline{T}_{m}^{*}$
denote the completion time of coflow $C_{m}$ in an optimal schedule
for the multi-core EPS network. 
\begin{thm}
Algorithm \ref{alg:alg1} (EPS variant) achieves a $\left(4H+1\right)$-approximation
for minimizing the total weighted CCT in a heterogeneous $H$-core
EPS network with arbitrary release times.\label{theorem2} 
\end{thm}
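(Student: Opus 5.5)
The plan is to mirror the proof of Theorem \ref{theorem1}, setting $\delta=0$ and dropping the reconfiguration constraints. I will go through the three phases in turn: ordering, allocation, scheduling. The EPS LP keeps only the ordering constraints (\ref{eq:lp-order})--(\ref{eq:lp-relax}), the transmission constraints (\ref{eq:lp-trans}) with aggregate rate $R=\sum_{h\in\mathcal{H}}r^{h}$, and the release constraints (\ref{eq:lp-release}). Any feasible EPS schedule induces a feasible integral solution: by the time $C_m$ completes, all traffic at port $p$ of coflows finishing earlier, plus that of $C_m$, has passed through port $p$ at aggregate rate at most $R$. Hence $\sum_m w_m\widetilde{T}_m\le\sum_m w_m\overline{T}_m^{*}$.

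\textbf{Ordering phase.} After re-indexing by $\widetilde{T}_m$, the pairwise-ordering argument of Lemma \ref{transmission-capacity-prefix-bound} applies verbatim and gives $\rho_{1:m}\le 2R\widetilde{T}_m$.

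\textbf{Allocation phase.} The greedy rule now minimizes $\varPsi^{h}$. Repeat the argument of Lemma \ref{allocation-phase-prefix-bound} with the $\tau$-terms deleted: look at the last flow placed on a core $h_1$, compare its resulting measure with the tentative state on every other core $h_2$, and bound that tentative state by the full prefix load. This yields $\max_h\varPsi^{h}(\mathcal{Y}^{h}_{1:m})\le\rho_{1:m}/r_{\max}\le 2R\widetilde{T}_m/r_{\max}\le 2H\widetilde{T}_m$, using $R\le Hr_{\max}$.

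\textbf{Scheduling phase.} Here I need the EPS analogue of Lemma \ref{scheduling-phase-prefix-bound}, namely $T_m\le t_0(m)+2\max_h\varPsi^{h}(\mathcal{Y}^{h}_{1:m})$. I will model each EPS core with the same greedy, non-preemptive, look-ahead scheduler, treating each flow as occupying its ingress and egress ports exclusively, with $\delta=0$. This is a feasible, possibly conservative, EPS schedule, so the upper bound still holds. Fix a core $h$ carrying part of $C_m$, and let $f^{h}_m(i^\star,j^\star)$ be its last-finishing subflow, started at $t^\star$. The look-ahead rule ensures that throughout $[t_0(m),t^\star)$ at least one of $i^\star,j^\star$ is busy with prefix traffic. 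Therefore
\[
t^\star-t_0(m)\le \frac{\rho^{h}_{1:m,i^\star}+\rho^{h}_{1:m,j^\star}-2d^\star}{r^{h}},
\]
and adding the subflow's own transmission time $d^\star/r^{h}$ gives $T^{h}_m\le t_0(m)+2\varPsi^{h}(\mathcal{Y}^{h}_{1:m})$.

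\textbf{Combining.} The three phases give $T_m\le t_0(m)+4H\widetilde{T}_m$. Since $a_q\le\widetilde{T}_q\le\widetilde{T}_m$ for all $q\le m$, we have $t_0(m)\le\widetilde{T}_m$, so $T_m\le(4H+1)\widetilde{T}_m$. Weighting by $w_m$, summing over $m$, and applying the LP lower bound completes the proof.

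\textbf{Main obstacle.} The delicate step is the scheduling bound. It relies on the claim that after $t_0(m)$ no lower-priority flow can block $(i^\star,j^\star)$, which requires the look-ahead rule, just as in the OCS case. I also need to confirm that using an exclusive-port (matching) scheduler in the EPS setting is legitimate. It is, because the bound on the algorithm's own schedule is all that is needed, and the LP lower bound is unaffected by this modelling choice.
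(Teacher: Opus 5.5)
Your ordering, allocation and combining steps coincide with the paper's and are fine. The gap is in the scheduling step, and it is the same step the paper uses in Lemma~\ref{scheduling-phase-prefix-bound}, which its proof of this theorem invokes as ``the same prefix-based analysis''. You assert that for every $t\in[t_0(m),t^\star)$ one of $i^\star,j^\star$ is busy with traffic of $C_1,\dots,C_m$. No non-preemptive scheduler can guarantee this, however the look-ahead rule is read. Take one core of rate $1$, everything released at $0$, and let $C_1,C_2,C_3$ be single flows $(a,b)$, $(c,b)$, $(c,d)$ of sizes $L$, $\epsilon$, $L'>L$.
If $(c,d)$ is started at time $0$, it holds $c$ until $L'$. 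Then on $[L,L')$ neither port of $C_2$'s flow carries prefix traffic, and $T_2\approx L'$, far above $2\max_h\varPsi^h(\mathcal{Y}^h_{1:2})=2(L+\epsilon)$.
If $(c,d)$ is held back instead, $c$ and $d$ are both idle at time $0$ while $C_3$'s flow waits, so the invariant fails for $m=3$.
The rule as the paper's proof reads it (no admission on a port while any released unfinished higher-priority subflow is incident to it) makes the second choice, and iterating it is fatal. Let $C_\ell$ be a single flow $g_\ell$ of size $L$, with $g_{2s-1}=(x_s,y_s)$ and $g_{2s}=(x_{s+1},y_s)$. Each $g_\ell$ must wait for $g_{\ell-1}$ to finish even though both of its own ports are idle on $[0,(\ell-2)L)$. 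Hence $T_\ell=\ell L$, while $2\max_h\varPsi^h(\mathcal{Y}^h_{1:\ell})\le 4L$.

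So the bound $T_m\le t_0(m)+2\max_h\varPsi^h(\mathcal{Y}^h_{1:m})$ is false for the non-preemptive look-ahead scheduler, and so is the ratio. Add to each $C_\ell$ of the chain a private flow of size $2L+\ell\eta$ on fresh ports. This forces $\widetilde{T}_\ell=2L+\ell\eta$, hence the LP order $1,\dots,n$. With unit weights the optimum is at most $\sum_\ell(2L+\ell\eta)\approx 2nL$ (odd-indexed $g_\ell$ in $[0,L]$, even-indexed in $[L,2L]$), whereas the algorithm pays about $n^2L/2$.

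Your choice to treat flows as port-exclusive is harmless; it is non-preemption that breaks the argument. In the EPS setting you can repair the proof by letting each core preempt. At every release or completion, recompute a greedy maximal matching over the released unfinished subflows, scanned in priority order, and serve only those. Then, for $t\ge t_0(m)$, whenever $f^h_m(i^\star,j^\star)$ is not transmitting, one of its ports is serving a subflow scanned before it, i.e.\ traffic of $C_1,\dots,C_m$. This gives $T^h_m\le t_0(m)+\bigl(\rho^h_{1:m,i^\star}+\rho^h_{1:m,j^\star}-d^\star\bigr)/r^h\le t_0(m)+2\varPsi^h(\mathcal{Y}^h_{1:m})$, and the rest of your proof then goes through unchanged.
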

\begin{IEEEproof}
The ordering-phase prefix analysis remains valid with only the transmission-capacity
constraints, yielding $\rho_{1:m}\leq2R\widehat{T}_{m}$, where $\widehat{T}_{m}$
denotes the LP completion value of coflow $C_{m}$ in the EPS ordering
LP. Since reconfiguration is absent, the allocation-phase prefix bound
becomes 
\[
\max_{h\in\mathcal{H}}\varPsi^{h}\left(\mathcal{Y}_{1:m}^{h}\right)\leq\frac{\rho_{1:m}}{r_{\max}}.
\]

Following the same prefix-based analysis, the completion time of coflow
$C_{m}$ in the EPS setting satisfies 
\[
\begin{aligned}\begin{aligned}\overline{T}_{m} & \le t_{0}\left(m\right)+2\max_{h\in\mathcal{H}}\varPsi^{h}\left(\mathcal{Y}_{1:m}^{h}\right)\leq t_{0}\left(m\right)+2\frac{\rho_{1:m}}{r_{\max}}\\
 & \leq t_{0}\left(m\right)+4\frac{R\widehat{T}_{m}}{r_{\max}}\leq t_{0}\left(m\right)+4H\widehat{T}_{m}.
\end{aligned}
\end{aligned}
\]

Similarly, we can get 
\[
\begin{aligned}\sum_{m=1}^{M}w_{m}\overline{T}_{m} & \leq\left(4H+1\right)\sum_{m=1}^{M}w_{m}\overline{T}_{m}^{*}.\\[4pt]\end{aligned}
\]

This completes the proof. 
\end{IEEEproof}
\begin{Corollary} Algorithm \ref{alg:alg1} (EPS variant) achieves
a $4H$-approximation for minimizing the total weighted CCT when all
coflows are released at time zero (i.e., $a_{m}=0$ and $t_{0}\left(m\right)=0$)
in a heterogeneous $H$-core EPS network. \end{Corollary}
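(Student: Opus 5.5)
The statement to prove is the final Corollary: the EPS variant is a $4H$-approximation when all coflows are released at time zero. The plan is to specialize the proof of Theorem \ref{theorem2} to $a_m=0$ for all $m$. Then $t_0(m)=\max_{q\le m}a_q=0$, and the additive release-time term that produced the "$+1$" disappears.

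The first step is to re-establish the per-coflow bound
\[
\overline{T}_m\le 2\max_{h\in\mathcal{H}}\varPsi^{h}\left(\mathcal{Y}_{1:m}^{h}\right).
\]
With all releases at time zero, the look-ahead rule is vacuous and the scheduler is a plain greedy, non-preemptive, work-conserving policy in priority order. Repeat the argument of Lemma \ref{scheduling-phase-prefix-bound} on each core $h$. Take the last-completing subflow of $C_m$ on $h$, with ports $(i^\star,j^\star)$ and start time $t^\star$. Throughout $[0,t^\star)$ at least one of $i^\star,j^\star$ is busy with prefix traffic of $L_m$; otherwise the subflow would have started earlier. With $\delta=0$ and no reconfiguration term, this gives $t^\star\le(\rho^h_{1:m,i^\star}+\rho^h_{1:m,j^\star}-2d^\star)/r^h$. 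Adding $d^\star/r^h$ yields $T^h_m\le 2\varPsi^h(\mathcal{Y}^h_{1:m})$.

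The main point to verify is whether "prefix traffic" is still correct for the EPS model. If EPS ports are treated as rate-shared links rather than exclusive circuits, I would instead argue by per-port work conservation under priority. That argument gives the same bound with each port's busy time at most $\rho^h_{1:m,p}/r^h$. The remaining steps do not depend on which of the two arguments is used.

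The allocation step follows Lemma \ref{allocation-phase-prefix-bound} with $\tau$ terms dropped. Compare the last flow placed on any core with the alternative of placing it on the core of rate $r_{\max}$. This gives $\max_h\varPsi^h(\mathcal{Y}^h_{1:m})\le\rho_{1:m}/r_{\max}$. The ordering step applies Lemma \ref{transmission-capacity-prefix-bound} to the EPS LP, which has only constraints (\ref{eq:lp-order}), (\ref{eq:lp-relax}), (\ref{eq:lp-trans}) and (\ref{eq:lp-release}). This gives $\rho_{1:m}\le 2R\widehat{T}_m$.

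Chaining the bounds with $R\le H r_{\max}$ gives
\[
\overline{T}_m\le 4\frac{R}{r_{\max}}\widehat{T}_m\le 4H\widehat{T}_m.
\]
Finally, weight by $w_m$, sum over $m$, and use the fact that the EPS LP is a relaxation: any feasible EPS schedule induces a feasible integral LP solution, so $\sum_m w_m\widehat{T}_m\le\sum_m w_m\overline{T}^*_m$. This yields the $4H$ ratio.
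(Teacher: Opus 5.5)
Your overall route is the paper's: Theorem~\ref{theorem2} with $t_0(m)=0$. The pieces you reuse are all fine: Lemma~\ref{transmission-capacity-prefix-bound} on the EPS LP, the allocation bound with the $\tau$ terms dropped, $R\le Hr_{\max}$, and LP $\le$ OPT.

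The gap is in the scheduling step. You declare the look-ahead rule vacuous at zero release. You then claim that, under a plain greedy non-preemptive work-conserving scheduler, one of $i^\star,j^\star$ is busy \emph{with prefix traffic} throughout $[0,t^\star)$. Work conservation only shows that one of them is busy with \emph{something}. A lower-priority subflow can seize $i^\star$ while $j^\star$ is occupied by prefix traffic, and then hold $i^\star$ arbitrarily long.

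Concretely, take one core with unit rate and $\delta=0$. Let $C_1=\{(2,1)\}$ and $C_2=\{(1,1)\}$ have unit size, and $C_3=\{(1,2)\}$ have size $X$. With weights $(2,1,1)$ and $X\ge 2$, this is the EPS LP order. At time $0$ the greedy scan starts $(2,1)$ and $(1,2)$. So $C_2$ finishes at $X+1$, although $2\varPsi^{1}(\mathcal{Y}^{1}_{1:2})=4$. This is why the paper's Lemma~\ref{scheduling-phase-prefix-bound} invokes the first clause of the look-ahead rule: no lower-priority subflow is admitted on a port needed by a released, unfinished higher-priority subflow. That clause is not vacuous when all $a_m=0$, despite the paper's informal remark that you relied on.

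Be aware that this clause does not rescue the non-preemptive argument either. Under it, $f$ itself can be held back while both of its ports are idle, because a \emph{waiting} higher-priority subflow reserves one of them, and such reservations chain. Take unit flows $g_1=(1,1)$, $g_2=(2,1)$, $g_3=(2,2)$, $g_4=(3,2),\dots$ forming coflows $C_1,C_2,\dots$. Each $g_{\ell+1}$ is blocked by the waiting $g_\ell$, so the flows run one after another. Hence $C_\ell$ finishes at time $\ell$, while every port load is at most $2$, and $2\varPsi\le 4$ fails for $\ell\ge 5$. Padding each $C_\ell$ with a private flow of size $2+\ell\epsilon$ on fresh ports makes this the strict LP order; the weighted ratio then grows like $L/4$.

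So for a non-preemptive, port-exclusive intra-core scheduler, the per-coflow bound $\overline{T}_m\le 2\max_h\varPsi^{h}(\mathcal{Y}^{h}_{1:m})$ fails under either reading of the admission rule. This weakness is inherited from the paper's own Lemma~\ref{scheduling-phase-prefix-bound}.

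What does work is your fallback, made precise, taking the EPS intra-core scheduler to be a preemptive, priority-ordered rate allocation, which the EPS model permits. At every instant, process flows in priority order and give each flow the smaller of the residual capacities of its two ports. Then, until $f$ (the last flow of $C_m$ on core $h$, on ports $i^\star,j^\star$) finishes, one of $i^\star,j^\star$ is saturated by flows processed no later than $f$, all of which belong to $L_m$. Hence $\overline{T}^{h}_m\le(\rho^{h}_{1:m,i^\star}+\rho^{h}_{1:m,j^\star})/r^{h}\le 2\varPsi^{h}(\mathcal{Y}^{h}_{1:m})$, and the rest of your chain gives $4H$. Generic per-port work conservation, without this strict priority structure, would not suffice.
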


\section{Experimental Evaluations\label{sec:Experimental-Evaluations}}

In this section, we evaluate the performance of the proposed Algorithm
\ref{alg:alg1} through trace-driven simulations based on a realistic
Facebook workload.

\subsection{Experimental Setup}

This subsection describes the workload, evaluation metrics, and default
parameter settings.

\textbf{Workload:} We employ the widely adopted Facebook trace \cite{facebook},
collected from a MapReduce cluster consisting of 3000 machines and
150 racks. This data trace has been extensively used in prior coflow
scheduling studies \cite{sunflow,regularization,wang2024scheduling,wang2023efficient,wang2023online,wang2025optimal,huang2020weaver}.
It contains 526 coflows and is commonly reduced to a 150-port network
while preserving the original inter-arrival characteristics. For each
coflow, the trace provides receiver-level information, including the
set of receivers, the traffic received by each receiver, and the corresponding
sender, rather than explicit flow-level demands. To construct the
$N\times N$ demand matrix for each coflow, we transform the receiver-level
traffic into sender-receiver flows. Specifically, for each receiver,
the total received traffic is distributed pseudo-uniformly among its
associated senders, introducing a small random perturbation to avoid
perfectly uniform splitting. Finally, $N$ machines are randomly selected
from the trace as servers and mapped to ingress and egress ports,
yielding an $N$-port coflow instance.

\textbf{Performance Metrics:} Our primary optimization objective is
to minimize the total weighted coflow completion time (CCT), given
by $\sum_{m=1}^{M}w_{m}T_{m}$. We evaluate all schemes using the
normalized total weighted CCT, defined as 
\[
\mathrm{NormW}\left(\mathcal{A}\right)\triangleq\frac{\sum_{m=1}^{M}w_{m}T_{m}\left(\mathcal{A}\right)}{\sum_{m=1}^{M}w_{m}T_{m}\left(\textsc{Ours}\right)},
\]
where $\textsc{Ours}$ denotes Algorithm \ref{alg:alg1}. By definition,
$\mathrm{NormW}\left(\textsc{Ours}\right)=1$, and a larger value
indicates worse performance relative to $\textsc{Ours}$. In addition,
we also report tail CCT statistics, specifically p95 and p99 CCT.
These metrics capture the latency experienced by the most delayed
coflows and are particularly useful for reflecting the impact of inter-core
contention, port conflicts, and reconfiguration overhead on long-tail
performance.

Finally, to quantify the gap between practical performance and the
theoretical worst-case guarantee, we report the approximation ratio
of $\textsc{Ours}$, defined as

\[
\mathrm{Approx}\triangleq\frac{\sum_{m=1}^{M}w_{m}T_{m}\left(\textsc{Ours}\right)}{\sum_{m=1}^{M}w_{m}\widetilde{T}_{m}\left(LP\right)},
\]
where $\widetilde{T}_{m}\left(LP\right)$ denotes the objective value
of the LP relaxation.

\textbf{Default Parameters.} Unless stated otherwise, we adopt the
following default parameters: (i) number of ports $N=10$; (ii) number
of coflows $M=100$ (randomly sampled from the trace); (iii) number
of cores $K=3$; (iv) core rate vector $[10,20,30]$; (v) aggregated
port rate $R=60$; and (vi) reconfiguration delay $\delta=8$.

\subsection{Baseline Solutions}

We construct representative baselines by replacing or ablating individual
components of Algorithm \ref{alg:alg1} ($\textsc{Ours}$).
\begin{itemize}
\item \textbf{WSPT-ORDER} Considering a heuristic global coflow ordering
rule in place of the LP-guided ordering of Algorithm \ref{alg:alg1}.
Specifically, each coflow $C_{m}$ is assigned a priority score $w_{m}/T_{\mathrm{LB}}\left(D_{m}\right)$,
where $T_{\mathrm{LB}}\left(D_{m}\right)=\delta+\frac{\rho_{m}}{R}$.
Coflows are subsequently sorted in non-increasing order based on this
score. This ordering rule prioritizes coflows with higher weights
and lower intrinsic service requirements, thereby approximating the
weighted shortest-processing-time (WSPT) principle. The inter-core
flow allocation and intra-core circuit scheduling procedures remain
unchanged from those in Algorithm \ref{alg:alg1}.
\item \textbf{SUNFLOW-S} Replace the intra-core circuit scheduling module
in Algorithm \ref{alg:alg1} with Sunflow \cite{sunflow} under the
\textit{not-all-stop} model. The global coflow order and inter-core
flow allocation remain unchanged.
\item \textbf{BvN-S} Replace the intra-core circuit scheduling module in
Algorithm \ref{alg:alg1} with Birkhoff-von Neumann (BvN) decomposition
\cite{BvN} under the \textit{all-stop} model. Since BvN decomposition
applies to doubly stochastic matrices, the demand matrix must first
be transformed into a doubly stochastic form before decomposition
is performed. The global coflow order and inter-core flow allocation
remain unchanged.
\item \textbf{LOAD-ONLY} Replace the $\tau$-aware inter-core flow allocation
in Algorithm \ref{alg:alg1} with a load-only policy. Each flow is
assigned to the core that minimizes $\rho_{1:m}^{k}/r^{k}$, i.e.,
the reconfiguration term $\tau_{1:m}^{k}\delta$ is ignored. The global
coflow order and intra-core circuit scheduling remain unchanged.
\end{itemize}

\subsection{Experimental Results}

We first evaluate the performance of Algorithm \ref{alg:alg1} under
the default setting. We then vary the reconfiguration delay $\delta$
and the number of ports $N$ to examine how performance changes. To
cover different multi-core network configurations, we consider $K=3,4,5$
under both imbalanced-rate and balanced-rate settings. Finally, we
report the approximation ratios for different values of $\delta$.

\subsubsection{Performance under the Default Setting}

Fig. \ref{fig:weighted_tail_cct} reports the normalized total weighted
CCT and normalized tail CCT (p95/p99) under the default setting, with
all results normalized to $\textsc{Ours}$. LOAD-ONLY increases the
normalized total weighted CCT to $1.37\times$, while the normalized
p95 and p99 tail CCT rise to $1.33\times$ and $1.32\times$, respectively.
This confirms that ignoring reconfiguration overhead during inter-core
allocation leads to inferior flow placements. Replacing the intra-core
scheduler with Sunflow (SUNFLOW-S) yields a similar increase in total
weighted CCT $\left(1.38\times\right)$, but causes much more severe
tail degradation, with p95 and p99 reaching $2.22\times$ and $2.26\times$.
BvN-S performs the worst, reaching $4.34\times$ normalized total
weighted CCT, $6.89\times$ normalized p95, and $7.07\times$ normalized
p99.

\begin{figure}[H]
\centering\includegraphics[width=9cm,totalheight=3cm,keepaspectratio,height=5.8cm]{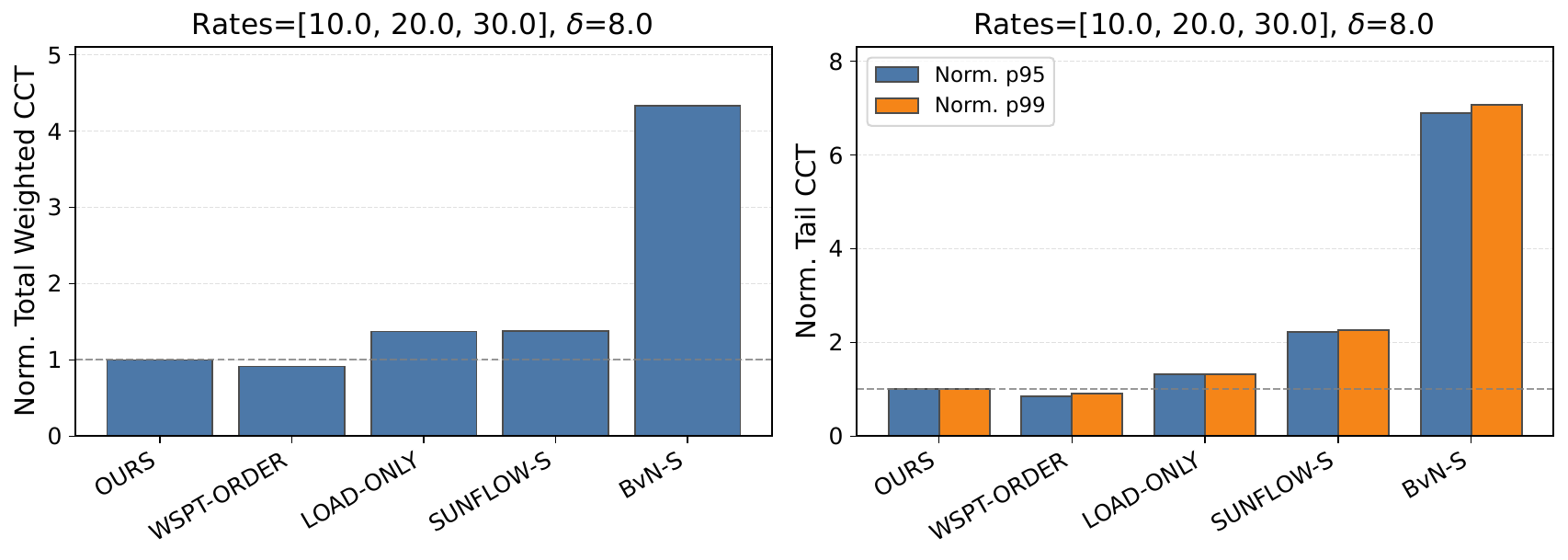}
\caption{Normalized Total Weighted CCT and Tail CCT (p95/p99) under the Default
Setting.}

\label{fig:weighted_tail_cct}
\end{figure}

\begin{figure}[h]
\centering \includegraphics[width=0.46\textwidth]{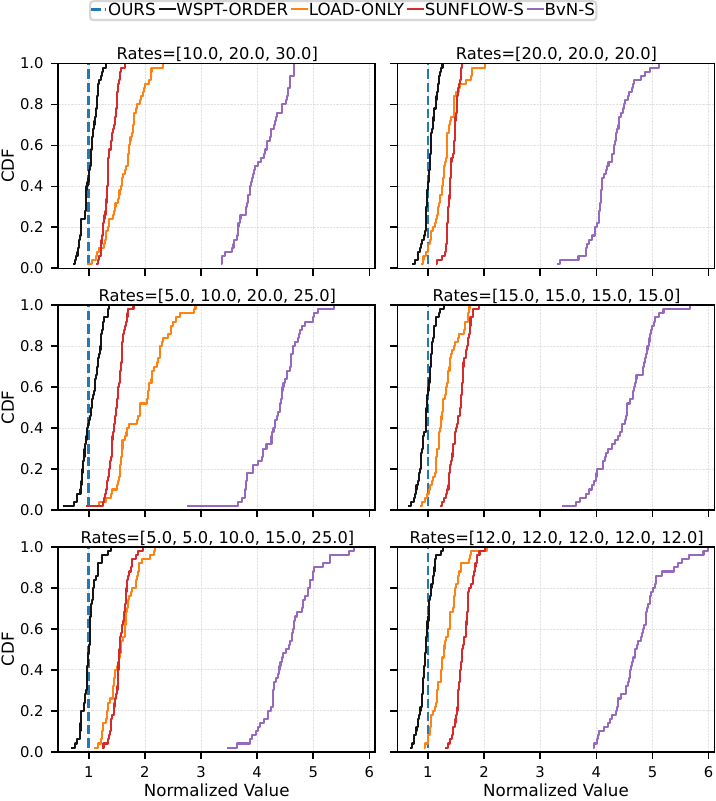}
\caption{CDF of Normalized Weighted CCT for $K=3,4,5$.}
\label{fig:cdf-cct}
\end{figure}

Notably, WSPT-ORDER slightly outperforms $\textsc{Ours}$, achieving
$0.92\times$ normalized total weighted CCT. This result is consistent
with common workload characteristics in data center networks, where
coflow sizes are typically heavy-tailed and flow sizes within each
coflow can be highly skewed. In such cases, efficient completion behavior
is typically dominated by a small number of large flows, making WSPT-based
ordering highly competitive in practice. This observation does not
contradict our theoretical results. Our LP-guided ordering method
is primarily designed to support the approximation analysis and establish
provable worst-case guarantees, whereas the WSPT-based priority rule
may better match the structure of these real-world workloads, thereby
yielding a slightly lower empirically weighted CCT.

\begin{table*}[t]
\centering \caption{Normalized Total Weighted CCT versus Reconfiguration Delay $\delta$
for $K=3,4,5$.}
\label{tab:delta_three_blocks} 
\global\long\def\arraystretch{1.12}%
 \setlength{\tabcolsep}{4pt}

{\scriptsize{}%
\begin{minipage}[t]{0.28\textwidth}%
\centering %
\begin{tabular}{ccccc}
\toprule 
$\delta$ & WSPT & LOAD & SUN & BvN\tabularnewline
\midrule 
\multicolumn{5}{c}{\textbf{Imbalanced rates:} $[10,20,30]$}\tabularnewline
\midrule 
2 & 1.0433 & 1.3067 & 1.4525 & 3.7783\tabularnewline
4 & 1.0222 & 1.4272 & 1.4257 & 4.0436\tabularnewline
6 & 0.9221 & 1.4343 & 1.3453 & 4.0609\tabularnewline
8 & 0.9156 & 1.3734 & 1.3767 & 4.3356\tabularnewline
10 & 0.9938 & 1.4895 & 1.4334 & 4.4845\tabularnewline
12 & 0.9259 & 1.5980 & 1.3290 & 4.3408\tabularnewline
\midrule 
\multicolumn{5}{c}{\textbf{Balanced rates:} $[20,20,20]$}\tabularnewline
\midrule 
2 & 0.9947 & 1.1150 & 1.5056 & 3.8005\tabularnewline
4 & 0.9722 & 1.0824 & 1.3479 & 3.8702\tabularnewline
6 & 0.9512 & 1.1293 & 1.3904 & 4.1295\tabularnewline
8 & 0.9965 & 1.2076 & 1.4087 & 4.3552\tabularnewline
10 & 0.9544 & 1.1736 & 1.4047 & 4.3997\tabularnewline
12 & 0.9639 & 1.2806 & 1.3850 & 4.4360\tabularnewline
\bottomrule
\end{tabular}

\vspace{2mm}
 {\small (a) $K=3$}{\small\par}%
\end{minipage}}{\scriptsize{}%
\begin{minipage}[t]{0.28\textwidth}%
\centering %
\begin{tabular}{ccccc}
\toprule 
$\delta$ & WSPT & LOAD & SUN & BvN\tabularnewline
\midrule 
\multicolumn{5}{c}{\textbf{Imbalanced rates:} $[5,10,20,25]$}\tabularnewline
\midrule 
2 & 1.0289 & 1.1777 & 1.4508 & 3.6356\tabularnewline
4 & 0.9744 & 1.4584 & 1.5451 & 4.2780\tabularnewline
6 & 0.9720 & 1.4905 & 1.5090 & 4.4032\tabularnewline
8 & 0.9683 & 1.5270 & 1.4803 & 4.4713\tabularnewline
10 & 0.9460 & 1.4413 & 1.4787 & 4.5854\tabularnewline
12 & 0.9542 & 1.6088 & 1.4203 & 4.4858\tabularnewline
\midrule 
\multicolumn{5}{c}{\textbf{Balanced rates:} $[15,15,15,15]$}\tabularnewline
\midrule 
2 & 1.0014 & 1.2248 & 1.4542 & 3.6922\tabularnewline
4 & 0.9793 & 1.2630 & 1.5781 & 4.3146\tabularnewline
6 & 0.9558 & 1.3127 & 1.5298 & 4.4905\tabularnewline
8 & 0.8771 & 1.3644 & 1.4022 & 4.3379\tabularnewline
10 & 0.9843 & 1.5260 & 1.5788 & 4.8410\tabularnewline
12 & 0.9488 & 1.5116 & 1.5195 & 4.8772\tabularnewline
\bottomrule
\end{tabular}

\vspace{2mm}
 {\small (b) $K=4$}{\small\par}%
\end{minipage}}{\scriptsize{}%
\begin{minipage}[t]{0.28\textwidth}%
\centering %
\begin{tabular}{ccccc}
\toprule 
$\delta$ & WSPT & LOAD & SUN & BvN\tabularnewline
\midrule 
\multicolumn{5}{c}{\textbf{Imbalanced rates:} $[5,5,10,15,25]$}\tabularnewline
\midrule 
2 & 0.9113 & 1.0732 & 1.5024 & 3.6627\tabularnewline
4 & 0.9831 & 1.3271 & 1.6377 & 4.3848\tabularnewline
6 & 0.9834 & 1.3387 & 1.5864 & 4.7461\tabularnewline
8 & 0.9553 & 1.4922 & 1.5969 & 4.8355\tabularnewline
10 & 0.9671 & 1.5511 & 1.5683 & 4.8308\tabularnewline
12 & 0.9917 & 1.5567 & 1.6362 & 5.1677\tabularnewline
\midrule 
\multicolumn{5}{c}{\textbf{Balanced rates:} $[12,12,12,12,12]$}\tabularnewline
\midrule 
2 & 1.0741 & 1.0858 & 1.4734 & 3.6548\tabularnewline
4 & 0.9431 & 1.2338 & 1.6177 & 4.4731\tabularnewline
6 & 0.9798 & 1.2338 & 1.5875 & 4.7659\tabularnewline
8 & 0.9393 & 1.4397 & 1.5784 & 4.8658\tabularnewline
10 & 0.8742 & 1.4383 & 1.4913 & 4.7138\tabularnewline
12 & 0.8310 & 1.4200 & 1.5154 & 4.7411\tabularnewline
\bottomrule
\end{tabular}

\vspace{2mm}
 {\small (c) $K=5$}{\small\par}%
\end{minipage}}{\scriptsize\par}
\end{table*}

\subsubsection{Performance Evaluation}

To further examine the stability of the relative performance, Fig.
\ref{fig:cdf-cct} presents the CDF of the normalized total weighted
CCT under imbalanced and balanced rate settings for $K=3,4,5$.

\subsubsection{Impact of Reconfiguration Delay ($\delta$-Sensitivity)}

Table \ref{tab:delta_three_blocks} evaluates the impact of reconfiguration
delay $\delta=2,4,6,8,10,12$ on final performance for $K=3,4,5$,
under both imbalanced and balanced rate settings. Among the baselines,
WSPT-ORDER performs closest to $\textsc{Ours}$, and in some cases
even achieves values slightly below 1. In contrast, LOAD-ONLY consistently
performs worse, especially under imbalanced-rate settings, confirming
that ignoring reconfiguration overhead during inter-core allocation
leads to inferior flow placements. BvN-S exhibits the largest performance
gap, with its normalized total weighted CCT typically ranging from
approximately $3.6\times$ to $5.2\times$.

\begin{figure}[h]
\centering \includegraphics[width=0.46\textwidth]{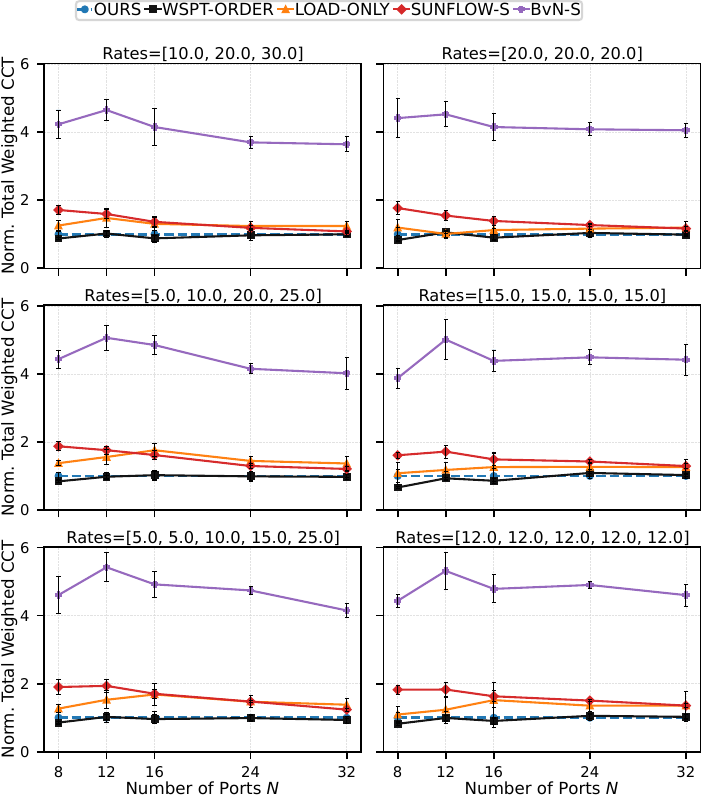}
\caption{Normalized Total Weighted CCT versus Number of Ports $N$ for $K=3,4,5$.}
\label{fig:n-scaling}
\end{figure}

\subsubsection{Impact of the Number of Ports ($N$-Scaling)}

Fig. \ref{fig:n-scaling} illustrates the normalized total weighted
CCT as the number of ports varies over $N\in\left\{ 8,12,16,24,32\right\} $,
with $M=100$ and $\delta=8$. Overall, across all settings, $\textsc{Ours}$
consistently outperforms LOAD-ONLY, SUNFLOW-S, and BvN-S, while WSPT-ORDER
remains the strongest baseline and stays close to $\textsc{Ours}$.
Furthermore, as the number of cores increases, the advantage of our
method over LOAD-ONLY, SUNFLOW-S, and BvN-S becomes more significant.

\begin{figure}[h]
\centering \includegraphics[width=0.46\textwidth]{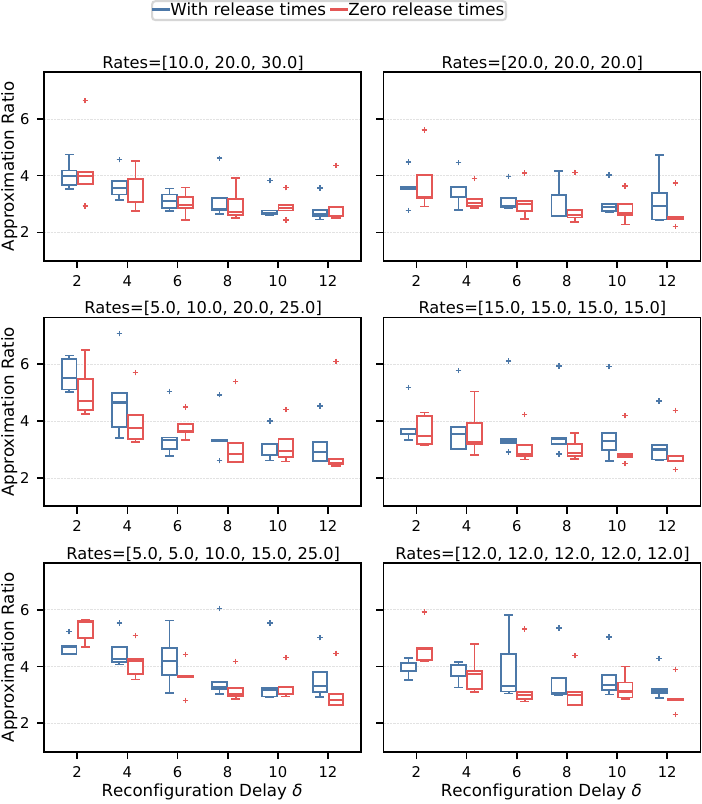}
\caption{Approximation Ratio versus Reconfiguration Delay $\delta$ for $K=3,4,5$.}
\label{fig:approximation_plt}
\end{figure}

\subsubsection{Approximation Ratio versus Reconfiguration Delay}

Fig. \ref{fig:approximation_plt} shows the approximation ratio of
$\textsc{Ours}$ under different reconfiguration delay $\delta$,
values, for both zero-release and arbitrary-release settings. Two
observations are immediately evident. First, the observed approximation
ratios are consistently much lower than the proven worst-case guarantees
of $8K$ and $8K+1$. Across all test configurations, the approximation
ratios remain within a much narrower range, mostly between $2.5$
and $5.0$, rather than approaching the corresponding theoretical
bounds. This indicates that the approximation guarantees are conservative,
primarily providing robustness in the worst case, while the proposed
algorithm performs significantly better in practice on representative
instances. Second, the approximation ratio under zero-release time
is consistently lower than that under arbitrary-release time. This
is expected, as introducing release times adds additional time constraints,
making the scheduling problem more challenging.

\section{Conclusions\label{sec:Conclusions}}

This paper investigates the multi-coflow scheduling problem in multi-core
data center networks, focusing particularly on multiple OCS cores
operating in parallel under the \textit{not-all-stop} (asynchronous)
reconfiguration model. In this scenario, the scheduler must jointly
account for (i) the coupled capacity constraints across heterogeneous
OCS cores due to inter-core traffic allocation and (ii) the intra-core
feasibility constraints induced by port exclusivity and asynchronous
reconfiguration delay.

We develop an approximation algorithm for minimizing the total weighted
coflow completion time (CCT) and establish a global worst-case performance
guarantee. Specifically, in a heterogeneous $K$-core OCS network,
the algorithm achieves an $\left(8K+1\right)$-approximation for arbitrary
release times, and an $8K$-approximation when all coflows are released
at time zero. These guarantees explicitly capture the impact of core
parallelism on the scheduling complexity and provide the provable
approximation bounds for multi-coflow scheduling in multi-core OCS
networks under asynchronous reconfiguration. We further show that
the same algorithm framework can be naturally applied to $H$-core
EPS networks by replacing the OCS-specific lower bounds and setting
the reconfiguration delay to zero. In that setting, the resulting
EPS variant achieves a $\left(4H+1\right)$-approximation for arbitrary
release times, and a $4H$-approximation when all coflows are released
at time zero.

A promising direction for future research is online scheduling in
multi-core OCS networks, where coflows arrive dynamically, and their
demand matrices may be only partially observed. The main objective
is to develop online algorithms with provable competitive ratios,
thereby establishing theoretical foundations for the design of practical
system policies.

\section*{Acknowledgement}

This work is supported by Department of Environment, Science and Innovation
of Queensland State Government under Quantum 2032 Challenge Program
(Project \#Q2032001). The corresponding author is Hong Shen.

 \bibliographystyle{IEEEtran}
\addcontentsline{toc}{section}{\refname}\bibliography{IEEE_Journal}

@Article{Baraat,
  author    = {Dogar, Fahad R and Karagiannis, Thomas and Ballani, Hitesh and Rowstron, Antony},
  journal   = {ACM SIGCOMM Computer Communication Review},
  title     = {Decentralized task-aware scheduling for data center networks},
  year      = {2014},
  number    = {4},
  pages     = {431--442},
  volume    = {44},
  publisher = {ACM New York, NY, USA},
}

@Article{Aalo,
  author    = {Chowdhury, Mosharaf and Stoica, Ion},
  journal   = {ACM SIGCOMM Computer Communication Review},
  title     = {Efficient coflow scheduling without prior knowledge},
  year      = {2015},
  number    = {4},
  pages     = {393--406},
  volume    = {45},
  publisher = {ACM New York, NY, USA},
}

@Misc{facebook,
  howpublished = {\url{https://github.com/coflow/coflow-benchmark}},
  title        = {FaceBookTrace},
  year         = {2019},
}

@Article{mapreduce,
  author    = {Dean, Jeffrey and Ghemawat, Sanjay},
  journal   = {Communications of the ACM},
  title     = {MapReduce: simplified data processing on large clusters},
  year      = {2008},
  number    = {1},
  pages     = {107--113},
  volume    = {51},
  publisher = {ACM New York, NY, USA},
}

@InProceedings{spark,
  author    = {Zaharia, Matei and Chowdhury, Mosharaf and Das, Tathagata and Dave, Ankur and Ma, Justin and McCauly, Murphy and Franklin, Michael J and Shenker, Scott and Stoica, Ion},
  booktitle = {9th $\{$USENIX$\}$ Symposium on Networked Systems Design and Implementation ($\{$NSDI$\}$ 12)},
  title     = {Resilient distributed datasets: A fault-tolerant abstraction for in-memory cluster computing},
  year      = {2012},
  pages     = {15--28},
}

@InProceedings{dryad,
  author    = {Isard, Michael and Budiu, Mihai and Yu, Yuan and Birrell, Andrew and Fetterly, Dennis},
  booktitle = {Proceedings of the 2nd ACM SIGOPS/EuroSys European Conference on Computer Systems 2007},
  title     = {Dryad: distributed data-parallel programs from sequential building blocks},
  year      = {2007},
  pages     = {59--72},
}

@InProceedings{networking,
  author    = {Chowdhury, Mosharaf and Stoica, Ion},
  booktitle = {Proceedings of the 11th ACM Workshop on Hot Topics in Networks},
  title     = {Coflow: A networking abstraction for cluster applications},
  year      = {2012},
  pages     = {31--36},
}

@InProceedings{NC-DRF,
  author       = {Wang, Luping and Wang, Wei},
  booktitle    = {2018 IEEE 38th International Conference on Distributed Computing Systems (ICDCS)},
  title        = {Fair coflow scheduling without prior knowledge},
  year         = {2018},
  organization = {IEEE},
  pages        = {22--32},
}

@article{literature13,
	title={Scheduling dependent coflows to minimize the total weighted job completion time in datacenters},
	author={Tian, Bingchuan and Tian, Chen and Wang, Bingquan and Li, Bo and He, Zehao and Dai, Haipeng and Liu, Kexin and Dou, Wanchun and Chen, Guihai},
	journal={Computer Networks},
	volume={158},
	pages={193--205},
	year={2019},
	publisher={Elsevier}
}

@inproceedings{literature9,
	title={Minimizing the total weighted completion time of coflows in datacenter networks},
	author={Qiu, Zhen and Stein, Cliff and Zhong, Yuan},
	booktitle={Proceedings of the 27th ACM symposium on Parallelism in Algorithms and Architectures},
	pages={294--303},
	year={2015}
}

@article{literature4,
	title={Managing data transfers in computer clusters with orchestra},
	author={Chowdhury, Mosharaf and Zaharia, Matei and Ma, Justin and Jordan, Michael I and Stoica, Ion},
	journal={ACM SIGCOMM Computer Communication Review},
	volume={41},
	number={4},
	pages={98--109},
	year={2011},
	publisher={ACM New York, NY, USA}
}

@article{literature32,
	title={Efficient scheduling of weighted coflows in data centers},
	author={Wang, Zhiliang and Zhang, Han and Shi, Xingang and Yin, Xia and Li, Yahui and Geng, Haijun and Wu, Qianhong and Liu, Jianwei},
	journal={IEEE Transactions on Parallel and Distributed Systems},
	volume={30},
	number={9},
	pages={2003--2017},
	year={2019},
	publisher={IEEE}
}

@inproceedings{literature6,
	title={Efficient coflow scheduling with varys},
	author={Chowdhury, Mosharaf and Zhong, Yuan and Stoica, Ion},
	booktitle={Proceedings of the 2014 ACM conference on SIGCOMM},
	pages={443--454},
	year={2014}
}

@inproceedings{literature10,
	title={Brief announcement: Improved approximation algorithms for scheduling co-flows},
	author={Khuller, Samir and Purohit, Manish},
	booktitle={Proceedings of the 28th ACM Symposium on Parallelism in Algorithms and Architectures},
	pages={239--240},
	year={2016}
}

@InProceedings{sunflow,
  author    = {Huang, Xin Sunny and Sun, Xiaoye Steven and Ng, TS Eugene},
  booktitle = {Proceedings of the 12th International on Conference on emerging Networking EXperiments and Technologies},
  title     = {Sunflow: Efficient optical circuit scheduling for coflows},
  year      = {2016},
  pages     = {297--311},
}

@InProceedings{co-scheduler,
  author       = {Li, Zhuozhao and Shen, Haiying},
  booktitle    = {2019 IEEE 39th International Conference on Distributed Computing Systems (ICDCS)},
  title        = {Co-scheduler: Accelerating data-parallel jobs in datacenter networks with optical circuit switching},
  year         = {2019},
  organization = {IEEE},
  pages        = {186--195},
}

@Article{zhang2020minimizing,
  author    = {Zhang, Tong and Ren, Fengyuan and Bao, Jiakun and Shu, Ran and Cheng, Wenxue},
  journal   = {IEEE Transactions on Parallel and Distributed Systems},
  title     = {Minimizing coflow completion time in optical circuit switched networks},
  year      = {2020},
  number    = {2},
  pages     = {457--469},
  volume    = {32},
  publisher = {IEEE},
}

@InProceedings{omco,
  author       = {Xu, Chao and Tan, Haisheng and Hou, Jiahui and Zhang, Chi and Li, Xiang-Yang},
  booktitle    = {2018 IEEE International Conference on Communications (ICC)},
  title        = {OMCO: Online multiple coflow scheduling in optical circuit switch},
  year         = {2018},
  organization = {IEEE},
  pages        = {1--6},
}

@Article{BvN,
  author  = {Birkhoff, Garrett},
  journal = {Univ. Nac. Tucuman, Ser. A},
  title   = {Tres observaciones sobre el algebra lineal},
  year    = {1946},
  pages   = {147--154},
  volume  = {5},
}

@Article{regularization,
  author    = {Tan, Haisheng and Zhang, Chi and Xu, Chao and Li, Yupeng and Han, Zhenhua and Li, Xiang-Yang},
  journal   = {IEEE/ACM Transactions on Networking},
  title     = {Regularization-based coflow scheduling in optical circuit switches},
  year      = {2021},
  number    = {3},
  pages     = {1280--1293},
  volume    = {29},
  publisher = {IEEE},
}

@article{improved,
  author={Shafiee, Mehrnoosh and Ghaderi, Javad},
  journal={IEEE/ACM Transactions on Networking},
  title={An improved bound for minimizing the total weighted completion time of coflows in datacenters},
  volume={26},
  number={4},
  pages={1674--1687},
  year={2018},
  publisher={IEEE}
}

@Article{theoretical5,
  author    = {Shafiee, Mehrnoosh and Ghaderi, Javad},
  journal   = {IEEE/ACM Transactions on Networking},
  title     = {Scheduling coflows with dependency graph},
  year      = {2021},
  number    = {1},
  pages     = {450--463},
  volume    = {30},
  publisher = {IEEE},
}

@InProceedings{decentralized1,
  author       = {Luo, Shouxi and Yu, Hongfang and Zhao, Yangming and Wu, Bin and Wang, Sheng and others},
  booktitle    = {2015 IEEE International Conference on Communications (ICC)},
  title        = {Minimizing average coflow completion time with decentralized scheduling},
  year         = {2015},
  organization = {IEEE},
  pages        = {307--312},
}

@InProceedings{ONS,
  author       = {Jiang, Renjie and Zhang, Tong and Yi, Changyan},
  booktitle    = {2023 IEEE Symposium on Computers and Communications (ISCC)},
  title        = {Effective Coflow Scheduling in Hybrid Circuit and Packet Switching Networks},
  year         = {2023},
  organization = {IEEE},
  pages        = {1156--1161},
}

@InProceedings{reco,
  author       = {Zhang, Chi and Tan, Haisheng and Xu, Chao and Li, Xiang-Yang and Tang, Shaojie and Li, Yupeng},
  booktitle    = {2019 IEEE 39th International Conference on Distributed Computing Systems (ICDCS)},
  title        = {Reco: Efficient regularization-based coflow scheduling in optical circuit switches},
  year         = {2019},
  organization = {IEEE},
  pages        = {111--121},
}

@Article{wang2023efficient,
  author    = {Wang, Xin and Shen, Hong and Tian, Hui},
  journal   = {IEEE Transactions on Network and Service Management},
  title     = {Efficient and Fair: Information-Agnostic Online Coflow Scheduling by Combining Limited Multiplexing with DRL},
  year      = {2023},
  number    = {4},
  pages     = {4572--4584},
  volume    = {20},
  publisher = {IEEE},
}

@Article{wang2023online,
  author    = {Wang, Xin and Shen, Hong},
  journal   = {Future Generation Computer Systems},
  title     = {Online scheduling of coflows by attention-empowered scalable deep reinforcement learning},
  year      = {2023},
  pages     = {195--206},
  volume    = {146},
  publisher = {Elsevier},
}

@article{wang2024scheduling,
  title={Scheduling coflows in hybrid optical-circuit and electrical-packet switches with performance guarantee},
  author={Wang, Xin and Shen, Hong and Tian, Hui},
  journal={IEEE/ACM Transactions on Networking},
  volume={32},
  number={3},
  pages={2299--2314},
  year={2024},
  publisher={IEEE}
}

@article{gonzalez1976open,
  title={Open shop scheduling to minimize finish time},
  author={Gonzalez, Teofilo and Sahni, Sartaj},
  journal={Journal of the ACM (JACM)},
  volume={23},
  number={4},
  pages={665--679},
  year={1976},
  publisher={ACM New York, NY, USA}
}

@article{chen2023scheduling,
  title={Scheduling coflows for minimizing the total weighted completion time in heterogeneous parallel networks},
  author={Chen, Chi-Yeh},
  journal={Journal of Parallel and Distributed Computing},
  volume={182},
  pages={104752},
  year={2023},
  publisher={Elsevier}
}

@InProceedings{CODA,
  author    = {Zhang, Hong and Chen, Li and Yi, Bairen and Chen, Kai and Chowdhury, Mosharaf and Geng, Yanhui},
  booktitle = {Proceedings of the 2016 ACM SIGCOMM Conference},
  title     = {Coda: Toward automatically identifying and scheduling coflows in the dark},
  year      = {2016},
  pages     = {160--173},
}

@inproceedings{huang2020weaver,
  title={Weaver: Efficient coflow scheduling in heterogeneous parallel networks},
  author={Huang, Xin Sunny and Xia, Yiting and Ng, TS Eugene},
  booktitle={2020 IEEE International Parallel and Distributed Processing Symposium (IPDPS)},
  pages={1071--1081},
  year={2020},
  organization={IEEE}
}

@article{wang2025optimal,
  title={Optimal Partitioning of Traffic Demand for Coflow Scheduling in Hybrid Switches},
  author={Wang, Xin and Shen, Hong and Tian, Hui},
  journal={IEEE Transactions on Network and Service Management},
  year={2025},
  publisher={IEEE}
}

@misc{cisco2016gci,
  author       = {{Cisco. (2016)}},
  title        = {Cisco Global Cloud Index: Forecast and Methodology, 2015--2020},
  howpublished = {\url{https://www.cisco.com/c/dam/en/us/solutions/collateral/service-provider/global-cloud-index-gci/white-paper-c11-738085.pdf}},
}

@misc{cisco2016_40g,
  author       = {{Cisco White Paper. (2016)}},
  title        = {The Future is 40 Gigabit Ethernet},
  howpublished = {\url{https://www.cisco.com/c/dam/en/us/products/collateral/switches/catalyst-6500-series-switches/white-paper-c11-737238.pdf}},
}

@article{chen2023efficient,
  title={Efficient approximation algorithms for scheduling coflows with total weighted completion time in identical parallel networks},
  author={Chen, Chi-Yeh},
  journal={IEEE Transactions on Cloud Computing},
  volume={12},
  number={1},
  pages={116--129},
  year={2023},
  publisher={IEEE}
}

@inproceedings{poutievski2022jupiter,
  title={Jupiter evolving: transforming google's datacenter network via optical circuit switches and software-defined networking},
  author={Poutievski, Leon and Mashayekhi, Omid and Ong, Joon and Singh, Arjun and Tariq, Mukarram and Wang, Rui and Zhang, Jianan and Beauregard, Virginia and Conner, Patrick and Gribble, Steve and others},
  booktitle={Proceedings of the ACM SIGCOMM 2022 Conference},
  pages={66--85},
  year={2022}
}

@article{wang2026scheduling,
	title         = {Scheduling Coflows in Multi-Core OCS Networks with Performance Guarantee},
	author        = {Wang, Xin and Shen, Hong and Tian, Hui and Wang, Dong},
	journal       = {arXiv preprint arXiv:2604.08242},
	year          = {2026},
	eprint        = {2604.08242},
	archivePrefix = {arXiv},
	primaryClass  = {cs.DC},
	doi           = {10.48550/arXiv.2604.08242},
	url           = {https://doi.org/10.48550/arXiv.2604.08242}
}

\vspace{12pt}

\end{document}